\newcommand{\diag}{\mathop{\rm diag}}
\newcommand{\vast}{\bBigg@{4}}
\newcommand{\vastt}{\bBigg@{6}}
\theoremstyle{definition}
\theoremstyle{remark}
\newtheorem{rem}{Remark}
\newtheorem{lem}{Lemma}
\newtheorem{cor}{Corollary}
\newtheorem{prop}{Proposition}
\g@addto@macro\th@remark{\thm@headpunct{\normalfont:}}
\newcommand{\distas}[1]{\mathbin{\overset{#1}{\kern\z@\sim}}}%
\newsavebox{\mybox}\newsavebox{\mysim}
\newcommand{\distras}[1]{%
  \savebox{\mybox}{\hbox{\kern3pt$\scriptstyle#1$\kern3pt}}%
  \savebox{\mysim}{\hbox{$\sim$}}%
  \mathbin{\overset{#1}{\kern\z@\resizebox{\wd\mybox}{\ht\mysim}{$\sim$}}}%
	}
\algnewcommand\INPUT{\item[\textbf{Input:}]}
\algnewcommand\OUTPUT{\item[\textbf{Output:}]}
\begin{document}

\title{All Cognitive MIMO: A New Multiuser Detection Approach with Different Priorities}

\author{Nikolaos~I.~Miridakis, Theodoros~A.~Tsiftsis,~\IEEEmembership{Senior Member,~IEEE},\\ Dimitrios~D.~Vergados,~\IEEEmembership{Senior Member,~IEEE}, and Angelos~Michalas,~\IEEEmembership{Member,~IEEE}
\thanks{N. I. Miridakis is with the Department of Computer Engineering, Piraeus University of Applied Sciences, 12244 Aegaleo, Greece (e-mail: nikozm@unipi.gr).}
\thanks{T. A. Tsiftsis is with the School of Engineering, Nazarbayev University, 010000 Astana, Kazakhstan (e-mails: tsiftsis@teiste.gr, theodoros.tsiftsis@nu.edu.kz).}
\thanks{D. D. Vergados is with the Department of Informatics, University of Piraeus, 18534, Piraeus, Greece (email: vergados@unipi.gr).}
\thanks{A. Michalas is with the Department of Informatics Engineering, Technological Education Institute of Western Macedonia, 52100, Kastoria, Greece (e-mail: amichalas@kastoria.teiwm.gr).}
}

\markboth{}%
{}
\maketitle

\begin{abstract}
A new detection scheme for multiuser multiple-input multiple-output (MIMO) systems is analytically presented. In particular, the transmitting users are being categorized in two distinct priority service groups, while they communicate directly with a multi-antenna receiver. The linear zero-forcing scheme is applied in two consecutive detection stages upon the signal reception. In the first stage, the signals of one service group are detected, followed by the second stage including the corresponding detection of the remaining signals. An appropriate switching scheme based on specific transmission quality requirements is utilized prior to the detection so as to allocate the signals of a given service group to the suitable detection stage. The objective is the enhancement of the reception quality for both service groups. The proposed approach can be implemented directly in cognitive radio communication assigning the secondary users to the appropriate service group. The exact outage probability of the considered system is derived in closed form. The special case of massive MIMO is further studied yielding some useful engineering outcomes; the effective channel coherence time and a certain optimality condition defining both the transmission quality and effective number of independent transmissions.
\end{abstract}

\begin{IEEEkeywords}
Cognitive radio, imperfect channel estimation, massive MIMO, multiuser detection, outage probability, spatial multiplexing.
\end{IEEEkeywords}

\IEEEpeerreviewmaketitle

\section{Introduction}
\IEEEPARstart{T}{he} scarcity of wireless bandwidth prompts the need for spectrally efficient methods. Multiple-input multiple-output (MIMO) communication is widely acknowledged as a key technology for achieving high data rates in bandwidth-constrained wireless systems \cite{j:MietznerSchober2009}. Recently, there has been a considerable research effort in exploiting the space dimension through spatial multiplexing methods so as to enhance the multiuser diversity; especially in dense communication systems \cite{j:NgoLarsson2013}. Specifically, spatial multiplexing can be used to transmit independent multiple data streams that can be separated using adequate signal processing at the receiver. Nowadays, recent advances of MIMO deployments allow an increasing scale on the order of antenna arrays, up to several hundreds of antennas, yielding to the so-called very-large scale (or massive) MIMO \cite{j:massive,j:bjornson}. Moving towards to this trend, the multiuser access for spatial multiplexed transmission systems can be further increased, reflecting on the overall enhancement of spectral and data rate efficiency. 

On another front, cognitive radio (CR) has emerged as one of the most promising technologies to resolve the issue of spectrum scarcity \cite{j:Haykin,j:LiangChen2011}. The key enabling technology of CR relies on the provision of capability to share the spectrum in an opportunistic manner, such that the secondary transmissions do not cause any harmful interference to the primary communication. Due to the complementary benefits of MIMO and CR, hybrid CR MIMO systems are of prime research interest \cite{j:NaeemRehmani2017}.

To date, three cornerstone paradigms explore the coexistence between primary and secondary systems; the interweave, underlay, and overlay CR schemes \cite{j:GoldsmithJafar2009}. These schemes provide different modes of operation for the secondary nodes, according to the application-dependent scenario. A fundamental feature of the interweave and underlay schemes is that the primary system is fully unaware of the existence of any underlying CR activity. On the other hand, overlay scheme admits a certain level of cooperation between primary and secondary systems (e.g., see \cite{j:SimeoneStanojev2008,j:SuMatyjas2012,j:MannaLouie2011,j:SongHasna2013,j:ZhengSong2013,j:ZhengKrikidis2013} and references therein). In such cooperation scenarios, secondary users serve as relays, by forwarding the primary users' data streams. In return, a certain resource fraction (e.g., time and/or frequency) is reserved for CR-only use. The shift from a single-antenna \cite{j:SimeoneStanojev2008,j:SuMatyjas2012} to multiple-antenna \cite{j:MannaLouie2011,j:SongHasna2013,j:ZhengSong2013,j:ZhengKrikidis2013} regime has shown a tremendous increase in achievable data rate regions, while it reduces the resource fraction allocated only for the CR system.

So far, these efforts have been limited to only having the CR nodes assist the primary nodes' traffic via relaying. There is no consideration of the converse or a broader vision of a \emph{policy-based} cooperation between the two communication systems. In fact, the latter limitation is mainly due to the current Federal Communications Commission (FCC) requirements. Still, the recent works in \cite{c:YuanShi2013,j:YuanTian2016} studied the performance of a full cooperative communication between the two systems, on a network-level scenario, where both primary and secondary nodes serve as relays to forward data streams to one another. A fundamental question arises when adopting this case: What would be the benefit from adopting such an approach? Interestingly, it was explicitly demonstrated that the throughput regions of both systems are greatly enhanced when applying full cooperation \cite{j:YuanTian2016}. Yet, single-antennas were assumed in \cite{c:YuanShi2013,j:YuanTian2016}. The scenario of a fully cooperative communication between primary and secondary MIMO systems has not been studied in the open technical literature. 

Capitalizing on the aforementioned observations, we introduce a novel wireless communication approach, which is termed as `\emph{all cognitive MIMO system}'. The envisioned system consists of multiple primary and secondary nodes, randomly placed in a given geographical area. These nodes simultaneously transmit their data streams to a common receiver, which is equipped with multiple antennas; thereby, forming a multiuser MIMO system. This system is different from the conventional multiuser MIMO, such that it consists of two distinct priority classes. Particularly, primary and secondary nodes are grouped together in distinct service groups, Service~1 and Service~2, correspondingly. The rather practical (due to the efficient tradeoff between performance and complexity) linear processing is achieved at the receiver \cite{j:MiridakisVergados2013}. In this work, the linear zero-forcing (ZF) scheme is adopted for the multiuser signal detection. Each service group is described by different priorities and/or transmission quality requirements. Clearly, Service~1 gets the highest priority, representing the licensed primary network nodes. Doing so, the receiver guarantees the transmission quality of these nodes. Only in the case when the latter requirement is satisfied, then the receiver strives for the enhancement of the transmission quality for Service~2.

Focusing on spatial multiplexed transmission, the receiver simultaneously captures the multiple signals from the two service groups. To improve the reception quality, the ZF filter is utilized in two consecutive stages. Therefore, the streams that are being detected/decoded in the second stage experience less inter-stream interference since the contribution of signals detected in the first stage have already been removed. As a countermeasure, the streams of the second stage experience a higher system latency due to the two-stage detection process. Further, a switching scheme is introduced, aiming to ensure, and more than this, to enhance the aforementioned transmission quality for each service group. The role of the switching scheme is to efficiently allocate/reallocate the streams of Service~1 and Service~2 in the first and second detection stage, respectively, or vice versa.  

At this point, it is noteworthy that in future wireless networks, machine-type communications such as the Internet of Things, Internet of Everything, Smart X, etc. are expected to play an important role. In such applications, connectivity rather than high throughput seems to be more important. As an illustrative example, transmitting devices of this type can be considered as secondary nodes, assigned in Service~2 of the considered approach. Even in dense primary networks, there may be certain time instances where a fraction of system resources could be available (occasionally). Hence, the proposed scheme can effectively be utilized to serve both systems simultaneously, as described above.

Overall, the main benefits of the proposed approach are summarized as:
\begin{itemize}
	\item A single receiver is used for both primary and secondary communication. Thus, the ever increasing economical and environmental (e.g., carbon footprint) costs, associated with the operating expenditure of communication networks, is reduced.
	\item The considered full cooperation between the primary and secondary system can be visualized as follows: In the downlink of an infrastructure-based system, secondary nodes may serve as relays to assist the primary communication (e.g., as in \cite{j:SimeoneStanojev2008,j:SuMatyjas2012,j:MannaLouie2011,j:SongHasna2013,j:ZhengSong2013,j:ZhengKrikidis2013}). In return, the secondary nodes may access the network via the common receiver, in the uplink direction, by applying the proposed approach (i.e., this case is analytically described in subsection \ref{System Model}.A).
	\item Both the primary and secondary nodes are able to use a full transmission power profile without affecting each others.
	\item The introduced switching scheme at the detector enhances the performance of both primary and secondary communication. 
	\item The adoption of linear detection with a moderate computational burden facilitates the applicability of the considered scheme in realistic conditions.
\end{itemize}

In what follows, the performance of the introduced `all cognitive MIMO' system is analytically studied, whereas some useful engineering outcomes are extracted. The main contributions of this work can be summarized as follows:
\begin{itemize}
	\item Rayleigh channel fading conditions are assumed. Moreover, the channel gains are modeled as independent and non-necessarily identically distributed (i.n.n.i.d.) RVs. This reflects to the case of arbitrary link distances between the involved nodes and the receiver, an appropriate condition for practical applications. The extreme scenarios of independent and non-identically distributed (i.n.i.d.) and independent and identically distributed (i.i.d.) received channel gains are also considered as special cases. Moreover, the joint imperfect channel state information (CSI) is considered along with the so-called channel aging effect (i.e., due to the movement of transmitting nodes and/or fast channel fading variations). 
	\item The second detection stage, except from the imperfect CSI, may also suffer from residual noise caused by imperfect signal decoding (and thus signal removal) during the first detection stage. Such a scenario is explicitly defined in the included analysis.
	\item The exact outage performance of the considered system is derived in a closed-form expression for the general channel fading scenario including arbitrary antenna arrays.
	\item The scenario of massive MIMO is further studied, standing for an illustrative example of prime interest. In such an `all cognitive massive MIMO' concept, the channel coherence time is analytically derived. In addition, the admissible number of secondary nodes is also defined by deriving a simple yet necessary and sufficient optimality condition.   
\end{itemize}

The rest of this paper is organized as follows: In Section \ref{System Model}, the considered system model is explicitly defined along with the proposed mode of operation. Section \ref{SNR Statistics} provides the most important statistics of the signal-to-noise ratio (SNR) for the received streams. Leveraging the latter results, Section \ref{Performance Metrics} presents some key performance metrics, namely, the outage probability of the system for the general scenario and the effective channel coherence time along with the optimal number of admissible transmitting secondary nodes for the massive MIMO scenario. Selected numerical results and useful discussions are presented in Section \ref{Numerical Results}, while some concluding remarks are given in Section \ref{Conclusion}.

\emph{Notation}: Vectors and matrices are represented by lowercase bold typeface and uppercase bold typeface letters, respectively. Also, $\mathbf{X}^{\dagger}$ is the Moore-Penrose pseudo-inverse of $\mathbf{X}$, $[\mathbf{X}]_{i}$ represents the $i^{\rm th}$ row of $\mathbf{X}$, and $\mathbf{x}_{i}$ denotes the $i^{\rm th}$ coefficient of $\mathbf{x}$. A diagonal matrix with entries $x_{1},\cdots,x_{n}$ is defined as $\diag\{x_{i}\}^{n}_{i=1}$. The superscripts $(\cdot)^{T}$ and $(\cdot)^{\mathcal{H}}$ denote transposition and Hermitian transposition, respectively, while $\|\cdot\|$ corresponds to the vector Euclidean norm. In addition, $\mathbf{I}_{v}$ stands for the $v\times v$ identity matrix, $\mathbb{E}[\cdot]$ is the expectation operator, $\overset{\text{d}}=$ represents equality in probability distributions and $\text{Pr}[\cdot]$ returns probability. Also, $f_{X}(\cdot)$, $F_{X}(\cdot)$, and $\bar{F}_{X}(\cdot)$ represent probability density function (PDF), cumulative distribution function (CDF), and complementary CDF (CCDF) of the random variable (RV) $X$, respectively. A complex-valued Gaussian RV with mean $\mu$ and variance $\sigma^{2}$ is denoted as $\mathcal{CN}(\mu,\sigma^{2})$. Furthermore, $\Gamma(\cdot)$ denotes the Gamma function \cite[Eq. (8.310.1)]{tables}, $\Gamma(\cdot.,\cdot)$ is the upper incomplete Gamma function \cite[Eq. (8.350.2)]{tables}, while $J_{0}(\cdot)$ represents the zeroth-order Bessel function of the first kind \cite[Eq. (8.441.1)]{tables}. Finally, $\left\lfloor \cdot \right\rfloor$ is the floor function, while $\mathcal{O}(\cdot)$ is the Landau symbol (i.e., $f(x)=\mathcal{O}(g(x))$, when $|f(x)|\leq v |g(x)| \ \forall x\geq x_{0}$, $\left\{v,x_{0}\right\} \in \mathbb{R}$).

\section{System Model}
\label{System Model}
Consider a multiuser communication system, where the transmitted streams can be groupped together in two distinct service groups. Each service groups corresponds to a different transmission quality level and/or priority. Let $M_{1}$ single-antenna nodes comprise the first service group, entitled as Service 1, and $M_{2}$ single-antenna nodes comprise the second group, entitled as Service 2.\footnote{It follows from the subsequent analysis that the considered system is equivalent to the case when a single transmitter for each service group is used, equipped with $M_{1}$ and $M_{2}$ antennas, respectively.} The system operates under the presence of a (common) receiver equipped with $N\geq M$ antennas, where $M\triangleq M_{1}+M_{2}$. Moreover, i.n.n.i.d. Rayleigh flat fading channels are assumed, reflecting arbitrary distances among the involved nodes with respect to the receiver, an appropriate condition for practical applications.

The spatial multiplexing mode of operation is implemented, where $M$ independent data streams are simultaneously transmitted by the corresponding transmitting nodes. The suboptimal yet quite efficient ZF detection scheme is adopted at the receiver.

The received signal at the $n^{\rm th}$ sample time-instance reads as
\begin{align}
\mathbf{y}[n]=\mathbf{H}[n] \mathbf{P}^{\frac{1}{2}}[n] \mathbf{s}[n]+\mathbf{w}[n],
\label{eq1}
\end{align} 
where $\mathbf{y}[n] \in \mathbb{C}^{N\times 1}$, $\mathbf{H}[n] \in \mathbb{C}^{N\times M}$, $\mathbf{P}[n]\in \mathbb{R}^{M\times M}$, $\mathbf{s}[n] \in \mathbb{C}^{M\times 1}$ and $\mathbf{w}[n] \in \mathbb{C}^{N\times 1}$ denote the received signal, the channel matrix, the received signal power, the transmitted signal and the additive white Gaussian noise (AWGN), respectively. It holds that $\mathbf{w}[n]\overset{\text{d}}=\mathcal{CN}(\mathbf{0},N_{0}\mathbf{I}_{N})$ with $N_{0}$ denoting the AWGN variance and $\mathbf{s}[n]=[\mathbf{s}^{T}_{1}[n]\ \ \mathbf{s}^{T}_{2}[n]]^{T}$ with $\mathbb{E}[\mathbf{s}[n]\mathbf{s}^{\mathcal{H}}[n]]=\mathbf{I}_{M}$, while $\mathbf{s}_{1}[n] \in \mathbb{C}^{M_{1}\times 1}$ and $\mathbf{s}_{2}[n] \in \mathbb{C}^{M_{2}\times 1}$ are i.i.d. zero-mean unit-variance RVs. In addition, $\mathbf{H}[n]=[\mathbf{H}_{1}[n]\ \ \mathbf{H}_{2}[n]]$ with $\mathbf{H}_{1}[n] \in \mathbb{C}^{N\times M_{1}}$ and $\mathbf{H}_{2}[n] \in \mathbb{C}^{N\times M_{2}}$. Also, $\mathbf{h}_{i}[n]\overset{\text{d}}=\mathcal{CN}(\mathbf{0},\mathbf{I}_{N})$ is a column vector of $\mathbf{H}[n]$. Finally, $\mathbf{P}[n]\triangleq \diag\{p_{i}\}^{M}_{i=1}$ with $p_{i}\triangleq p_{T}d_{i}^{-\omega_{i}}$, where $p_{T}$, $d_{i}$, and $\omega_{i}\in [2,6]$ correspond to the transmit power,\footnote{Without loss of generality, an equal transmit power profile is considered for all the involved transmitting nodes.} normalized distance (with a reference distance equal to $1$km) from the receiver, and path-loss exponent of the $i^{\rm th}$ transmitter, respectively.

\subsection{Mode of Operation}
The entire system communication is accomplished in consecutive frames. Each frame is divided in two parts/phases; the training phase and the data phase. In the former phase, each node transmits a unique pilot signature to the receiver to provide CSI. Afterwards, the latter phase occurs, where the $M$ transmitters simultaneously send their data streams at the remaining frame duration. The latter duration is directly related to the channel coherence time, determining the efficiency of the prior channel estimation.

In this paper, Service 1 denotes the primary service with the highest priority between the two services. More specifically, the system should preserve a predetermined transmission quality for the $M_{1}$ received streams, regardless of the presence or not of the other $M_{2}$ streams for Service 2. The transmission quality is measured with the aid of a certain SNR threshold, defined as $\gamma_{\rm T}\triangleq 2^{\mathcal{R}}-1$, reflecting a minimum data rate requirement $\mathcal{R}$ (in bps/Hz) for Service~1. Note that the transmitting nodes of each service group are fully unaware of the existence of the other service group. Only the receiver has a full overview of the system status, which guarantees the transmission quality of Service 1, while it strives for the enhancement of the transmission quality of Service 2. \emph{Due to the full awareness of the receiver, the considered MIMO system is termed as `all cognitive' providing access to both primary and secondary nodes, yet with different priorities}.  

Upon the CSI acquisition, the system enters the data phase. Doing so, the receiver captures the $M$ independent streams and performs signal detection so as to decode the corresponding data, until the next training phase. The ZF equalizer is utilized for the multiple signal detection. In the case when the transmission quality of \emph{all} the $M_{1}$ streams of Service 1 is sufficient, they are detected/decoded straightforwardly. Then, the remaining signal (which consists of the $M_{2}$ received streams of Service 2) enters a second-stage ZF detection process so as to obtain the corresponding transmitted data. The proposed approach provides the following features:
\begin{enumerate}
	\item The streams of Service 1 suffer from an increased inter-stream interfering power (i.e., $M$-fold) in comparison to the streams of Service 2, which experience only intra-stream interference (i.e., $M-M_{1}=M_{2}$-fold), due to the removal of $M_{1}$ symbols from the remaining signal, performed in the first detection stage.
	\item On the other hand, the processing delay and system latency regarding the $M_{2}$ streams of Service 2 is increased as compared to Service 1 because of the two-stage detection process. 
\end{enumerate}

Nevertheless, in the case when the transmission quality of Service 1 is not satisfied, then a switching process between the two service groups occurs at the receiver. In particular, the $M_{2}$ streams of Service 2 are detected/decoded first and then the remaining $M_{1}$ streams of Service 1 enter the second-stage detection (c.f. Fig.~\ref{fig1}). The main objective is the enhancement of the transmission quality of Service 1 (i.e., see feature 1 above) at the cost of a slightly increased system latency. In fact, the introduced processing delay is directly related to the computational complexity upon the signal detection. The complexity of the most efficient ZF filter (i.e., the pseudo-inverse operation) follows $\mathcal{O}(N^{q})$ with $2<q<3$ \cite{j:Coppersmith1990}. Keeping in mind that the signal processing is performed at the receiver (which is usually a base-station with a fixed power supply and advanced computational capabilities), the latter computational burden is not an issue for a typical low-to-moderate antenna range (i.e., $N\leq 8$). Nonetheless, it is obvious that the complexity emphatically increases for quite a high antenna regime. In this case, it is preferable to keep the streams of Service 1 in the first detection stage and optimally determine the admissible $M_{2}$ streams for Service 2 at the second detection stage (this case is thoroughly studied in subsection \ref{Performance Metrics}.B). The basic lines of reasoning of the proposed detection scheme are formalized in Algorithm~1.

\begin{algorithm}[t]
	\caption{Steps of the Proposed Detector}
	\begin{algorithmic}[1]
		\INPUT{$M_{1}$, $M_{2}$, $N$, and the corresponding channel statistics}
		 \IF{the extra complexity $\propto \mathcal{O}(N^{q})$ for Service~1 is acceptable}
			\STATE{Proceed to a switching check;}
				\IF{$\{{\rm SNR}_{i}\}^{M_{1}}_{i=1}>\gamma_{\rm T}$}
					\STATE Service~1 $\rightarrow$ $1^{\rm st}$ detection stage;
					\STATE Service~2 $\rightarrow$ $2^{\rm nd}$ detection stage;
					\STATE End of Algorithm;
				\ELSE
					\STATE Service~1 $\rightarrow$ $2^{\rm nd}$ detection stage;
					\STATE Service~2 $\rightarrow$ $1^{\rm st}$ detection stage;
					\STATE End of Algorithm;
				\ENDIF
		 \ELSE \:(e.g., in massive MIMO)
		   \STATE Service~1 $\rightarrow$ $1^{\rm st}$ detection stage;
			 \STATE Determine the optimal  value of $M_{2}$ via Algorithm~2 (see subsection \ref{Performance Metrics}.B);
			 \STATE End of Algorithm;
		\ENDIF 
	\end{algorithmic}
\end{algorithm} 

\begin{figure}[!t]
\centering
\includegraphics[trim=0.0cm 0.0cm 0.0cm 0.0cm, clip=true,totalheight=0.198\textheight]{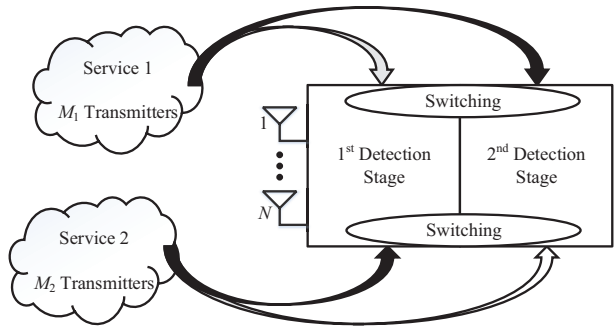}
\caption{Basic illustration of the proposed detection approach.}
\label{fig1}
\end{figure}

Overall, the main benefit of the aforementioned mode of operation is the enhancement of multiuser diversity, while maintaining a predefined transmission quality at the same time. Subsequently, the training and data phases of the proposed approach are explicitly defined and analyzed.   

\subsection{Training Phase: Channel Estimation}
During the training phase, $M$ orthogonal pilot sequences (i.e., unique spatial signal signatures) of length $M$ symbols are assigned to the involved nodes. Then, the received pilot signal can be expressed as
\begin{align}
\mathbf{Y}_{\text{tr}}[n]=\mathbf{H}_{\text{tr}}[n]\mathbf{P}^{\frac{1}{2}}_{\text{tr}}[n] \mathbf{\Psi}+\mathbf{W}_{\text{tr}}[n],
\label{trainingsignal}
\end{align}  
where $\mathbf{Y}_{\text{tr}}[n] \in \mathbb{C}^{N\times M}$, $\mathbf{H}_{\text{tr}}[n] \in \mathbb{C}^{N\times M}$, $\mathbf{P}_{\text{tr}}[n]\in \mathbb{R}^{M\times M}$, $\mathbf{\Psi} \in \mathbb{C}^{M\times M}$ and $\mathbf{W}_{\text{tr}}[n] \in \mathbb{C}^{N\times M}$ denote the received signal, the channel matrix, the received signal power, the transmitted pilot signals and AWGN, respectively, all representing the training phase. Also, the pilot signals are normalized satisfying $\mathbb{E}[\mathbf{\Psi}\mathbf{\Psi}^{\mathcal{H}}]=\mathbf{I}_{M}$.

Using MMSE channel estimation, the channel estimate and the channel estimation error remain uncorrelated (i.e., due to the orthogonality principle \cite{b:stevenkayesttheory}). In particular, we have that
\begin{align}
\hat{\mathbf{h}}_{i}[n]\hat{\mathbf{p}}^{\frac{1}{2}}_{i}[n]=\mathbf{h}_{i}[n]\mathbf{p}^{\frac{1}{2}}_{i}[n] +\tilde{\mathbf{h}}_{i}[n]\tilde{\mathbf{p}}^{\frac{1}{2}}_{i}[n],\ \ 1\leq i\leq M,
\label{channel decomposition}
\end{align}
where $\mathbf{h}_{i}\mathbf{p}^{\frac{1}{2}}_{i}\overset{\text{d}}=\mathcal{CN}(\mathbf{0},(p_{i}-\hat{p}_{i})\mathbf{I}_{N})$ is the true channel fading of the $i^{\rm th}$ transmitter and $\tilde{\mathbf{h}}_{i}\tilde{\mathbf{p}}^{\frac{1}{2}}_{i}\overset{\text{d}}=\mathcal{CN}(\mathbf{0},\hat{p}_{i}\mathbf{I}_{N})$ denotes its corresponding estimation error with \cite[Eq. (12)]{j:Truong_massive_Mimo}
\begin{align}
\hat{p}_{i}\triangleq \frac{p^{2}_{i}}{\left(p_{i}+\frac{1}{Mp_{T}}\right)}.
\label{esterror}
\end{align}
Except from the channel estimation errors, the channel aging effect occurs in several practical network setups. This is mainly because of the rapid channel variations during consecutive sample time-instances, due to, e.g., user mobility and/or severe fast fading conditions. The popular autoregressive (Jakes) model of a certain order \cite{jakes}, based on Gauss-Markov block fading channel, can accurately capture the latter effect. More specifically, it holds that
\begin{align}
\nonumber
\hat{\mathbf{h}}_{i}[n]\hat{\mathbf{p}}^{\frac{1}{2}}_{i}[n]=&\alpha^{M}\hat{\mathbf{h}}_{i}[n-M]\hat{\mathbf{p}}^{\frac{1}{2}}_{i}[n-M]\\
&+\sum^{M-1}_{m=0}\alpha^{m}\mathbf{e}_{i}[n-m],
\label{jakesorderm}
\end{align}
where $\alpha\triangleq J_{0}(2\pi f_{D}T_{s})$ with $f_{D}$ and $T_{s}$ denoting the maximum Doppler shift and the symbol sampling period, respectively. Moreover, $\mathbf{e}'_{i}\triangleq \sum^{M-1}_{m=0}\alpha^{m}\mathbf{e}_{i}[n-m]$ stands for the stationary Gaussian channel error vector due to the time variation of the channel, which is uncorrelated with $\mathbf{h}_{i}[n-M]\mathbf{p}^{\frac{1}{2}}_{i}[n-M]$, while 
\begin{align}
\mathbf{e}'_{i}\overset{\text{d}}=\mathcal{CN}(\mathbf{0},(1-\alpha^{2 M})p_{i}\mathbf{I}_{N}).
\label{errorderm}
\end{align}
For the sake of mathematical simplicity and without loss of generality, we assume that the channel remains unchanged over the time period of training phase, while it may change during the subsequent data transmission phase. Thus, adopting the autoregressive model of order one, we get 
\begin{align}
\hat{\mathbf{h}}_{i}[n]\hat{\mathbf{p}}^{\frac{1}{2}}_{i}[n]=\alpha \hat{\mathbf{h}}_{i}[n-1]\hat{\mathbf{p}}^{\frac{1}{2}}_{i}[n-1] +\mathbf{e}_{i}[n].
\label{channel aging1}
\end{align} 
It readily follows that\footnote{In what follows, the time-instance index $n$ is dropped for ease of presentation, since all the involved random vectors are mutually independent.} 
\begin{align}
\hat{\mathbf{H}}\hat{\mathbf{P}}^{\frac{1}{2}}=\mathbf{H}\mathbf{P}^{\frac{1}{2}}+\mathbf{E},
\end{align}
with
\begin{align}
\nonumber
\hat{\mathbf{H}}\hat{\mathbf{P}}^{\frac{1}{2}}&\overset{\text{d}}=\mathcal{CN}(\mathbf{0},a^{2}\hat{\mathbf{P}}),\\
\nonumber
\mathbf{H}\mathbf{P}^{\frac{1}{2}}&\overset{\text{d}}=\mathcal{CN}(\mathbf{0},\mathbf{P}),\\
\mathbf{E}&\overset{\text{d}}=\mathcal{CN}(\mathbf{0},\mathbf{P}-a^{2}\hat{\mathbf{P}}),
\label{channeljoint}
\end{align} 
where $\mathbf{E}\triangleq [\mathbf{e}_{1}\ \cdots \ \mathbf{e}_{M}]$ stands for the estimation error matrix. It should be noted that the latter model in (\ref{channeljoint}) combines the joint effect of channel aging and channel estimation error.

\subsection{Data Phase: Signal Detection}
Upon the training phase, the receiver captures the statistics of all the upcoming transmissions of each node, which occur in the subsequent data phase. Specifically, all the transmitters simultaneously send their streams at the receiver, whereas the latter node detects the total signal via a ZF equalizer, $(\hat{\mathbf{H}}\hat{\mathbf{P}}^{\frac{1}{2}})^{\dagger}$, such that 
\begin{align}
\nonumber
\mathbf{r}&=\left(\hat{\mathbf{H}}\hat{\mathbf{P}}^{\frac{1}{2}}\right)^{\dagger}\mathbf{y}\\
\nonumber
&=\left(\hat{\mathbf{H}}\hat{\mathbf{P}}^{\frac{1}{2}}\right)^{\dagger}\left(\hat{\mathbf{H}}\hat{\mathbf{P}}^{\frac{1}{2}}\mathbf{s}+\mathbf{E}\mathbf{s}+\mathbf{w}\right)\\
&=\mathbf{s}+\left(\hat{\mathbf{H}}\hat{\mathbf{P}}^{\frac{1}{2}}\right)^{\dagger}\mathbf{E}\mathbf{s}+\left(\hat{\mathbf{H}}\hat{\mathbf{P}}^{\frac{1}{2}}\right)^{\dagger}\mathbf{w}.
\label{postdetectsignal}
\end{align}
In the first detection stage, the $M_{1}$ received streams (which belong to Service~1) are simultaneously detected, decoded and then extracted from the total post-detected signal $\mathbf{r}$. 

Afterwards, the receiver applies a second-stage ZF filter so as to detect the remaining $M_{2}$ streams (which correspond to Service~2), yielding
\begin{align}
\nonumber
\mathbf{r}'&=\left(\hat{\mathbf{H}}_{2}\hat{\mathbf{P}}'^{\frac{1}{2}}\right)^{\dagger}\mathbf{y}'\\
\nonumber
&=\left(\hat{\mathbf{H}}_{2}\hat{\mathbf{P}}'^{\frac{1}{2}}\right)^{\dagger}\left(\hat{\mathbf{H}}_{2}\hat{\mathbf{P}}'^{\frac{1}{2}}\mathbf{s}_{2}+\mathbf{E}_{2}\mathbf{s}_{2}+\mathbf{w}'\right)\\
&=\mathbf{s}_{2}+\left(\hat{\mathbf{H}}_{2}\hat{\mathbf{P}}'^{\frac{1}{2}}\right)^{\dagger}\mathbf{E}_{2}\mathbf{s}_{2}+\left(\hat{\mathbf{H}}_{2}\hat{\mathbf{P}}'^{\frac{1}{2}}\right)^{\dagger}\mathbf{w}',
\label{postdetectsignal2}
\end{align} 
where $\mathbf{E}_{2}\triangleq [\mathbf{e}_{M_{1}+1}\ \cdots \ \mathbf{e}_{M}]$, $\hat{\mathbf{P}}'\triangleq \diag\{\hat{p}_{i}\}^{M}_{i=M_{1}+1}$, and $\mathbf{w}'\triangleq \sqrt{\beta} \mathbf{w}$ denotes the post-noise after the imperfect cancellation/removal of $M_{1}$ streams at the aforementioned first detection stage, while $\beta$ represents the noise uncertainty factor. In fact, $\mathbf{w}'$ is colored since it reflects the combined impact of AWGN and residual noise caused by the imperfect cancellation/removal of $M_{1}$ streams during the first detection stage. Conditioning on $\beta$, it holds that
\begin{align}
\mathbf{w}'\overset{\text{d}}=\mathcal{CN}(\mathbf{0},\hat{N}_{0}\mathbf{I}_{N}), 
\label{estnoise}
\end{align}
where $\hat{N}_{0}\triangleq \beta N_{0}$ represents the estimated noise variance in the second detection stage at the receiver for Service 2. Note that in ideal decoding conditions, $\hat{N}_{0}=N_{0}$ and $\beta=1$.

\subsection{Post-Effective SNR}
According to \eqref{postdetectsignal} and \eqref{postdetectsignal2}, the post-detected SNR for the streams of Service 1 and Service 2 are, respectively, expressed as
\begin{align}
\nonumber
&{\rm SNR}^{(1)}_{i}=\\
&\frac{1}{\left\|\left[\left(\hat{\mathbf{H}}\hat{\mathbf{P}}^{\frac{1}{2}}\right)^{\dagger}\right]_{i}\mathbf{E}\right\|^{2}+N_{0}\left\|\left[\left(\hat{\mathbf{H}}\hat{\mathbf{P}}^{\frac{1}{2}}\right)^{\dagger}\right]_{i}\right\|^{2}},\ i\in [1,M_{1}],
\label{snr1}
\end{align}
and 
\begin{align}
\nonumber
{\rm SNR}^{(2)}_{i}=&\frac{1}{\left\|\left[\left(\hat{\mathbf{H}}_{2}\hat{\mathbf{P}}'^{\frac{1}{2}}\right)^{\dagger}\right]_{i}\mathbf{E}_{2}\right\|^{2}+\left(\frac{\hat{N}_{0}}{\beta}\right)\left\|\left[\left(\hat{\mathbf{H}}_{2}\hat{\mathbf{P}}'^{\frac{1}{2}}\right)^{\dagger}\right]_{i}\right\|^{2}},\\
&\ \ \ \ \ \ \ \ \ \ \ \ \ \ \ \ \ \ \ \ \ \ \ \ \ \ \ \ \ \ \ \ \ \ \ \ \ \ i\in [1,M_{2}].
\label{snr2}
\end{align}

\section{SNR Statistics}
\label{SNR Statistics}
We commence by defining the closed-form CDFs of \eqref{snr1} and \eqref{snr2}, which represent the key statistics for the overall performance evaluation of the considered system.

\begin{lem}
The CDF of SNR for the $i^{\rm th}$ stream of Service 1 ($1\leq i \leq M_{1}$) is given by
\begin{align}
\nonumber
&F_{{\rm SNR}^{(1)}_{i}}(\gamma)=\\
&1-\sum^{N-M}_{k=0}\sum^{\rho(A)}_{v=1}\sum^{\tau_{v}(A)}_{j=1}\sum^{k}_{l=0}\frac{\Psi_{M}N^{k-l}_{0}\gamma^{k}\exp\left(-\frac{N_{0}\gamma}{a^{2}\hat{p}_{i}}\right)}{\left(\frac{\gamma}{a^{2}\hat{p}_{i}}+\frac{1}{\left\langle p_{v}-a^{2}\hat{p}_{v}\right\rangle}\right)^{j+l}},
\label{cdfsnr1}
\end{align}
where 
\begin{align}
\Psi_{M}\triangleq \binom{k}{l}\frac{\mathcal{X}_{v,j}(A)(j+l-1)!}{k!(a^{2}\hat{p}_{i})^{k}(j-1)!\left(\left\langle p_{v}-a^{2}\hat{p}_{v}\right\rangle\right)^{j}},
\label{psiM}
\end{align}
while $A\triangleq \diag\{p_{i}-a^{2}\hat{p}_{i}\}^{M}_{i=1}$; $\rho(A)$ denotes the number of distinct diagonal elements of $A$; $\left\langle p_{v}-a^{2}\hat{p}_{v}\right\rangle$ is a distinct diagonal element of $A$ in a descending order (e.g., $\left\langle p_{1}-a^{2}\hat{p}_{1}\right\rangle>\left\langle p_{2}-a^{2}\hat{p}_{2}\right\rangle>\cdots>\left\langle p_{v}-a^{2}\hat{p}_{v}\right\rangle$); $\tau_{v}(A)$ is the multiplicity of $\left\langle p_{v}-a^{2}\hat{p}_{v}\right\rangle$; and $\mathcal{X}_{v,j}(A)$ stands for the characteristic coefficient of $A$, explicitly provided in \cite[Eq. (129)]{j:ShinWin2008}.
\end{lem}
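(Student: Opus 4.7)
The plan is to decompose ${\rm SNR}^{(1)}_{i}$ in \eqref{snr1} as a ratio of two independent random variables whose marginals admit closed-form distributions, and then evaluate the resulting single expectation by routine gamma-integral manipulations.

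Writing $\hat{\mathbf{G}}\triangleq\hat{\mathbf{H}}\hat{\mathbf{P}}^{\frac{1}{2}}$, the first step is to factor $\|[\hat{\mathbf{G}}^{\dagger}]_{i}\|^{2}$ out of the denominator of \eqref{snr1}, so that
\begin{equation}
{\rm SNR}^{(1)}_{i}=\frac{U_{i}}{Y_{i}+N_{0}},\quad U_{i}\triangleq\frac{1}{\|[\hat{\mathbf{G}}^{\dagger}]_{i}\|^{2}},\quad Y_{i}\triangleq\frac{\|[\hat{\mathbf{G}}^{\dagger}]_{i}\mathbf{E}\|^{2}}{\|[\hat{\mathbf{G}}^{\dagger}]_{i}\|^{2}}.
\end{equation}
Conditioning on $\hat{\mathbf{G}}$ and using the independence of $\mathbf{E}$ from $\hat{\mathbf{G}}$ in \eqref{channeljoint}, each scalar $[\hat{\mathbf{G}}^{\dagger}]_{i}\mathbf{e}_{k}$ is zero-mean complex Gaussian with variance $(p_{k}-a^{2}\hat{p}_{k})\|[\hat{\mathbf{G}}^{\dagger}]_{i}\|^{2}$ and independent across $k$. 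Hence $Y_{i}\overset{\text{d}}=\sum_{k=1}^{M}(p_{k}-a^{2}\hat{p}_{k})Z_{k}$ with $\{Z_{k}\}$ i.i.d.\ unit-rate exponential; as this law does not depend on $\hat{\mathbf{G}}$, $Y_{i}$ is independent of $U_{i}$.

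The second step is to identify both marginals. For $U_{i}$, the standard ZF geometric identity writes $U_{i}=\|P_{\perp}\hat{\mathbf{g}}_{i}\hat{p}_{i}^{1/2}\|^{2}$, where $P_{\perp}$ is the orthogonal projection onto the complement of the remaining $M-1$ columns. Since the projected column is $\mathcal{CN}(\mathbf{0},a^{2}\hat{p}_{i}\mathbf{I}_{N})$ and independent of $P_{\perp}$, and $P_{\perp}$ has rank $N-M+1$ almost surely, $U_{i}/(a^{2}\hat{p}_{i})\sim\mathrm{Gamma}(N-M+1,1)$, so
\begin{equation}
F_{U_{i}}(u)=1-e^{-u/(a^{2}\hat{p}_{i})}\sum_{k=0}^{N-M}\frac{(u/(a^{2}\hat{p}_{i}))^{k}}{k!}.
\end{equation}
For $Y_{i}$, the partial-fraction (characteristic-coefficient) expansion of \cite[Eq.~(129)]{j:ShinWin2008} applied to the weight matrix $A$ gives the finite mixture of gamma densities
\begin{equation}
f_{Y_{i}}(y)=\sum_{v=1}^{\rho(A)}\sum_{j=1}^{\tau_{v}(A)}\frac{\mathcal{X}_{v,j}(A)}{(j-1)!\,\langle p_{v}-a^{2}\hat{p}_{v}\rangle^{j}}\,y^{j-1}e^{-y/\langle p_{v}-a^{2}\hat{p}_{v}\rangle},
\end{equation}
which correctly accommodates any repeated diagonal entries of $A$ through the multiplicities $\tau_{v}(A)$.

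The third and final step is to compute $F_{{\rm SNR}^{(1)}_{i}}(\gamma)=\mathbb{E}_{Y_{i}}\!\left[F_{U_{i}}(\gamma(Y_{i}+N_{0}))\right]$. I would expand $(Y_{i}+N_{0})^{k}$ with the binomial theorem and reduce each resulting term to an elementary gamma integral of the form $\int_{0}^{\infty}y^{j+l-1}e^{-by}\,dy=(j+l-1)!/b^{j+l}$ with $b=\gamma/(a^{2}\hat{p}_{i})+1/\langle p_{v}-a^{2}\hat{p}_{v}\rangle$. Collecting all constants into the factor $\Psi_{M}$ defined in \eqref{psiM} reproduces \eqref{cdfsnr1}. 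The main obstacle I anticipate is the careful bookkeeping of the partial-fraction expansion for $Y_{i}$ when several weights $p_{k}-a^{2}\hat{p}_{k}$ coincide---this is precisely what is encoded by $\rho(A)$, $\tau_{v}(A)$, and $\mathcal{X}_{v,j}(A)$; once this expansion is imported cleanly from \cite{j:ShinWin2008}, everything else is a standard gamma-integral computation.
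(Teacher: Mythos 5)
Your proposal is correct and follows essentially the same route as the paper's own proof: the identical decomposition ${\rm SNR}^{(1)}_{i}\overset{\text{d}}=X_{i}/(Y_{i}+N_{0})$ with $X_{i}$ a scaled chi-squared (Gamma$(N-M+1,a^{2}\hat{p}_{i})$) variable and $Y_{i}$ an independent sum of i.n.n.i.d.\ exponentials expanded via the Shin--Win characteristic coefficients, followed by the same binomial expansion and elementary gamma integrals. The only difference is cosmetic: you derive the independence and the two marginals from first principles (conditioning on $\hat{\mathbf{H}}\hat{\mathbf{P}}^{1/2}$ and the ZF projection argument), whereas the paper simply imports these facts from \cite[Proposition 1]{j:NgoMatthaiou2013}.
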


\begin{proof}
The proof is relegated in Appendix \ref{appa}.
\end{proof}

\begin{rem}
The CDF expression of \eqref{cdfsnr1} reflects to the general scenario of i.n.n.i.d. Rayleigh faded channels. Two extreme scenarios, where all the link distances are unequal and equal, can be modeled by i.n.i.d. and i.i.d. faded channels, correspondingly, which admit a more relaxed expression of \eqref{cdfsnr1}. Thus, for the i.n.i.d. case, $\rho(A)=M$ and $\tau_{v}(A)=1$. Similarly, for the i.i.d. case, $p_{i}-a^{2}\hat{p}_{i}\triangleq p-a^{2}\hat{p}\ \forall i$. Then, $\rho(A)=1$ and $\tau_{1}(A)=M$, while $\mathcal{X}_{v,j}(A)=1$ or $0$ when $j=M$ or $j<M$, respectively.   
\end{rem}

Next, we proceed to the derivation of the CDF of SNR for the remaining $M_{2}$ received streams of Service 2. To this end, the modeling of the colored noise vector $\mathbf{w}'$ is required. Following the same lines of reasoning as in \cite{j:Tandra,c:Kalamkar2013}, the noise uncertainty factor is modeled as $\beta=\hat{N}_{0}/N_{0}$. Let the upper bound on the noise uncertainty level be $L$ in dB, which is defined as
\begin{align}
L\triangleq {\rm sup}\{10 {\rm log}_{10}\beta\},
\label{Ldef}
\end{align}
while $\beta$ (in dB) is uniformly distributed in the range $\{-L,L\}$ \cite{j:Tandra}. Typically, $L\leq 2$dB for most practical applications \cite{c:Kalamkar2013}, whereas the PDF of $\beta$ is given by
\begin{align}
f_{\beta}(x)=\left\{
\begin{array}{c l}     
    \frac{5}{{\rm ln}(10)Lx}, & 10^{-\frac{L}{10}}<x<10^{\frac{L}{10}},\\
    & \\
    0, & {\rm otherwise}.
\end{array}\right.
\label{bpdf}
\end{align}

\begin{lem}
The CDF of SNR for the $i^{\rm th}$ stream of Service 2 ($1\leq i \leq M_{2}$) is given by
\begin{align}
\nonumber
&F_{{\rm SNR}^{(2)}_{i}}(\gamma)=1-\sum^{N-M_{2}}_{k=0}\sum^{\rho(A')}_{v=1}\sum^{\tau_{v}(A')}_{j=1}\sum^{k}_{l=0}\Psi_{M_{2}}\\
\nonumber
&\times \frac{5\gamma^{l}}{\left(\frac{\gamma}{a^{2}\hat{p}_{i}}+\frac{1}{\left\langle p_{v}-a^{2}\hat{p}_{v}\right\rangle}\right)^{j+l}L{\rm ln}(10)(a^{2}\hat{p}_{i})^{l-k}}\\
&\times \left[\Gamma\left(k-l,\frac{\hat{N}_{0}\gamma}{a^{2}\hat{p}_{i}10^{\frac{L}{10}}}\right)-\Gamma\left(k-l,\frac{\hat{N}_{0}\gamma}{a^{2}\hat{p}_{i}10^{-\frac{L}{10}}}\right)\right],
\label{cdfsnr2}
\end{align}
where $A'\triangleq \diag\{p_{i}-a^{2}\hat{p}_{i}\}^{M_{2}}_{i=1}$ and $\Psi_{M_{2}}$ is obtained from \eqref{psiM} by substituting $M$ with $M_{2}$ and $A$ with $A'$.
\end{lem}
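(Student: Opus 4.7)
The plan is to lift the argument of Lemma 1 to the second detection stage and then average out the noise uncertainty factor $\beta$.

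First, I would note that \eqref{snr2} has the same algebraic structure as \eqref{snr1}, with only three replacements: the $N \times M_2$ submatrix $\hat{\mathbf{H}}_{2}\hat{\mathbf{P}}'^{\frac{1}{2}}$ in place of the full channel, the error block $\mathbf{E}_{2}$ in place of $\mathbf{E}$, and the noise variance $\hat{N}_{0}/\beta$ in place of $N_{0}$. Conditioning on $\beta$, the post-noise is $\mathcal{CN}(\mathbf{0},\hat{N}_{0}\mathbf{I}_{N})$ as stated in \eqref{estnoise}, so the whole derivation leading to \eqref{cdfsnr1} transplants verbatim under the substitutions $M \mapsto M_{2}$, $A \mapsto A'$, and $N_{0} \mapsto \hat{N}_{0}/\beta$. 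This produces a conditional CDF with the same quadruple sum as \eqref{cdfsnr1} but with the exponential $\exp(-\hat{N}_{0}\gamma/(\beta a^{2}\hat{p}_{i}))$ and the scalar $(\hat{N}_{0}/\beta)^{k-l}$ multiplying $\gamma^{k}$, while the multiplicity data $\rho(A')$, $\tau_{v}(A')$, $\mathcal{X}_{v,j}(A')$ are inherited unchanged from the $M_{2}$-dimensional $A'$.

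Next, I would uncondition using \eqref{bpdf}. Since the only $\beta$-dependent factors in each summand are $\beta^{-(k-l)}$, the exponential, and the $1/\beta$ furnished by $f_{\beta}$, the integration reduces to
\[
\int_{10^{-L/10}}^{10^{L/10}} \beta^{-(k-l)-1}\exp\!\left(-\frac{\hat{N}_{0}\gamma}{\beta a^{2}\hat{p}_{i}}\right)d\beta.
\]
The substitution $u = \hat{N}_{0}\gamma/(\beta a^{2}\hat{p}_{i})$ collapses this to $\int u^{k-l-1} e^{-u}\, du$ over the interval with endpoints $\hat{N}_{0}\gamma/(a^{2}\hat{p}_{i}\,10^{L/10})$ and $\hat{N}_{0}\gamma/(a^{2}\hat{p}_{i}\,10^{-L/10})$, and $\int_{a}^{b} u^{s-1} e^{-u}\, du = \Gamma(s,a)-\Gamma(s,b)$ then produces the difference of upper incomplete gamma functions in \eqref{cdfsnr2}. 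The residual prefactor $\gamma^{l}/(a^{2}\hat{p}_{i})^{l-k}$ is what remains after pulling the $\gamma^{k}/(a^{2}\hat{p}_{i})^{k}$ already embedded in $\Psi_{M_{2}}$ together with $(\hat{N}_{0}/\beta)^{k-l}$ through the substitution.

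I expect the main obstacle to be bookkeeping rather than conceptual, specifically tracking how the powers of $\gamma$ and $a^{2}\hat{p}_{i}$ redistribute under the change of variable: the substitution converts $(k-l)$ factors of $\beta^{-1}$ into $u^{k-l}$ times powers of $a^{2}\hat{p}_{i}/(\hat{N}_{0}\gamma)$, and the $1/\beta$ from $f_{\beta}$ provides the extra factor that shifts $u^{k-l}$ to $u^{k-l-1}$, matching precisely the order $k-l$ of the incomplete gamma functions in the statement. Everything else, including the range $0 \leq k \leq N-M_{2}$ of the outer sum and the combinatorial coefficient $\Psi_{M_{2}}$ itself, is simply the relabeling of the Lemma 1 output.
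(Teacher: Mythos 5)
Your proposal is correct and follows essentially the same route as the paper: condition on $\beta$ to transplant the Lemma~1 derivation with $M\mapsto M_{2}$, $A\mapsto A'$, $N_{0}\mapsto \hat{N}_{0}/\beta$, then average over the uniform-in-dB noise uncertainty via \eqref{bpdf}. The only cosmetic difference is that the paper evaluates the resulting $\beta$-integral by citing a table identity \cite[Eq.~(3.351.2)]{tables}, whereas you carry out the change of variable $u=\hat{N}_{0}\gamma/(\beta a^{2}\hat{p}_{i})$ explicitly, arriving at the same difference of upper incomplete gamma functions and the same prefactor bookkeeping.
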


\begin{proof}
The proof is provided in Appendix \ref{appb}.
\end{proof}

In the high SNR regime, the previously derived CDFs admit a more relaxed formulation. Actually, this case can be modeled by assuming that $p_{i}/N_{0}\rightarrow +\infty$. 

\begin{cor}
In the high SNR regime, $F_{{\rm SNR}^{(1)}_{i}}(\cdot)$ is sufficiently approximated from \eqref{cdfsnr1}, by neglecting the exponential function. Also, $F_{{\rm SNR}^{(2)}_{i}}(\cdot)$ is approximated as
\begin{align}
\nonumber
&F_{{\rm SNR}^{(2)}_{i}}(\gamma)\xrightarrow{\frac{p_{i}}{N_{0}}\rightarrow +\infty} 1-\sum^{N-M_{2}}_{k=0}\sum^{\rho(A')}_{v=1}\sum^{\tau_{v}(A')}_{j=1}\frac{\mathcal{X}_{v,j}(A')}{k!(a^{2}\hat{p}_{i})^{k}}\\
&\times \frac{(j+k-1)!5 \gamma^{k}\left[{\rm ln}\left(\frac{\hat{N}_{0}\gamma}{a^{2}\hat{p}_{i}10^{-\frac{L}{10}}}\right)-{\rm ln}\left(\frac{\hat{N}_{0}\gamma}{a^{2}\hat{p}_{i}10^{\frac{L}{10}}}\right)\right]}{(j-1)!\left(\left\langle p_{v}-a^{2}\hat{p}_{v}\right\rangle\right)^{j}\left(\frac{\gamma}{a^{2}\hat{p}_{i}}+\frac{1}{\left\langle p_{v}-a^{2}\hat{p}_{v}\right\rangle}\right)^{j+k}L{\rm ln}(10)},
\label{cdfsnr2approx}
\end{align} 
yielding CDF expressions including elementary-only functions.
\end{cor}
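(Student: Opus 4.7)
The plan is to take pointwise limits of the closed-form CDFs in Lemmas~1 and 2 as $p_{i}/N_{0}\to+\infty$, interpreted as $N_{0},\hat{N}_{0}\to 0^{+}$ with the channel gains and the argument $\gamma$ held fixed.

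For Lemma~1, the only $N_{0}$-dependent factor inside the quadruple sum of \eqref{cdfsnr1}, apart from the explicit $N_{0}^{k-l}$ prefactor, is $\exp(-N_{0}\gamma/(a^{2}\hat{p}_{i}))$. Its exponent tends to zero, so by continuity the factor tends to $1$; substituting $1$ for it term-by-term in \eqref{cdfsnr1} yields the claimed approximation for $F_{{\rm SNR}^{(1)}_{i}}$, and no summand needs to be discarded.

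For Lemma~2, I would work term-by-term on the inner sum over $l$ in \eqref{cdfsnr2}. Set $x_{1}\triangleq \hat{N}_{0}\gamma/(a^{2}\hat{p}_{i}10^{L/10})$ and $x_{2}\triangleq \hat{N}_{0}\gamma/(a^{2}\hat{p}_{i}10^{-L/10})$, both of order $\hat{N}_{0}$. For $k-l\geq 1$ I invoke the standard small-argument expansion
\[
\Gamma(n,x) = \Gamma(n) - \frac{x^{n}}{n} + \mathcal{O}(x^{n+1}), \qquad x\to 0^{+},
\]
which gives $\Gamma(k-l,x_{1})-\Gamma(k-l,x_{2}) = \mathcal{O}(\hat{N}_{0}^{k-l})$. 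The corresponding summand also carries a numerator factor $(a^{2}\hat{p}_{i})^{k-l}$, arising from the $(a^{2}\hat{p}_{i})^{l-k}$ in the denominator of \eqref{cdfsnr2}, and $\hat{p}_{i}$ remains $\mathcal{O}(1)$ in the limit; hence the whole $l<k$ summand decays as $\mathcal{O}(\hat{N}_{0}^{k-l})$ and vanishes. For $l=k$, the logarithmic singularity $\Gamma(0,x) = -\ln x + \mathcal{O}(1)$ as $x\to 0^{+}$ produces
\[
\Gamma(0,x_{1})-\Gamma(0,x_{2}) = \ln(x_{2}) - \ln(x_{1}) + \mathcal{O}(\hat{N}_{0}),
\]
whose leading part is exactly the log-difference appearing in \eqref{cdfsnr2approx}. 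Specializing $\Psi_{M_{2}}$ at $l=k$ (using $\binom{k}{k}=1$) and noting that $(a^{2}\hat{p}_{i})^{l-k}=1$ then collapses the quadruple sum to the triple sum displayed in \eqref{cdfsnr2approx}.

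The only non-routine point is the bookkeeping for the $l<k$ terms: verifying that the numerator factor $(a^{2}\hat{p}_{i})^{k-l}$ does not offset the $\mathcal{O}(\hat{N}_{0}^{k-l})$ decay of the incomplete-gamma difference. Once this accounting is pinned down, the rest is direct substitution and algebraic simplification, and no substantive obstacle remains.
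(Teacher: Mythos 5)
Your proposal is correct and follows essentially the same route as the paper: the paper likewise sets the exponential to one for Service~1, and for Service~2 uses $\Gamma(n,0^{+})\rightarrow(n-1)!$ (so the $l<k$ bracketed differences cancel) together with $\Gamma(0,z)\rightarrow-\ln z-C$ (Euler constants cancelling) to keep only the $l=k$ terms, which is exactly your term-by-term limit argument with the small-argument expansions merely making the vanishing rate explicit.
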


\begin{proof}
The approximation of $F_{{\rm SNR}^{(1)}_{i}}(\cdot)$ is straightforward since $\exp(0^{+})\rightarrow 1$. Regarding the approximation of $F_{{\rm SNR}^{(2)}_{i}}(\cdot)$, we know from \cite[Eqs. (8.352.7) and (8.359.1)]{tables} that $\Gamma(n,0^{+})\rightarrow (n-1)!$ for $n\in \mathbb{N}^{+}$, while $\Gamma(0,z)\rightarrow -{\rm ln}(z)-C$ for $z\rightarrow 0^{+}$, where $C$ denotes the Euler's constant \cite[Eq. (9.73)]{tables}. Hence, after some simple manipulations, \eqref{cdfsnr2approx} is obtained.
\end{proof}

\section{Performance Metrics}
\label{Performance Metrics}
Capitalizing on the previously derived results, a key performance metric is analytically evaluated, namely, the outage probability of the considered system. In addition, the rather interesting scenario of massive MIMO deployments is further studied and analyzed, where some useful outcomes are manifested; the effective coherence time and a necessary optimality condition, which preserves the given transmission quality for both service transmission modes.

In what follows, the following auxiliary notations will be used for ease of presentation
\begin{align}
\nonumber
F_{{\rm SNR}^{(1)}_{i}}(\gamma;T_{1})&\triangleq F_{{\rm SNR}^{(1)}_{i}}(\gamma),\\
F_{{\rm SNR}^{(2)}_{i}}(\gamma;T_{2})&\triangleq F_{{\rm SNR}^{(2)}_{i}}(\gamma),
\label{notationcdf}
\end{align}
where $T_{1}$ and $T_{2}$ denote the number of available transmit antennas/nodes, and are directly obtained by substituting $M$ with $T_{1}$ and $M_{2}$ with $T_{2}$ in \eqref{cdfsnr1} and \eqref{cdfsnr2}, respectively.

\subsection{Outage Probability of the General Case}
Outage probability denotes the probability that the system SNR falls bellow a specified threshold value $\gamma_{\rm th}$ and, hence, it is directly related to its corresponding CDF.

Due to the mode of the proposed operation, outage probability of the $M_{1}$ streams for Service 1 is defined as
\begin{align}
\nonumber
&P^{(1)}_{{\rm out},i}(\gamma_{\rm th})\triangleq {\rm Pr}\Bigg[\left({\rm SNR}^{(1)}_{i}\leq \gamma_{\rm th}\ {\rm and }\ {\rm SNR}^{(1)}_{\min}>\gamma_{\rm T}\right)\\
& {\rm or }\ \left({\rm SNR}^{(2)}_{i}\leq \gamma_{\rm th}\ {\rm and }\ {\rm SNR}^{(1)}_{\min}\leq \gamma_{\rm T}\right)\Bigg],\ 1\leq i\leq M_{1},
\label{outserv1}
\end{align}
where $\gamma_{\rm T}$ denotes the switching threshold, which is generally independent of $\gamma_{\rm th}$, while ${\rm SNR}^{(1)}_{\min}$ is the minimum SNR between the $M_{1}$ SNR values of Service 1. More specifically, \eqref{outserv1} indicates two exclusive events:

\begin{description}
\item[Event{\rm 1:}]\ The condition where all the received signals' SNR of Service 1 (i.e., $M_{1}$ signals) exceed the switching threshold $\gamma_{\rm T}$. In this case, these signals are detected/decoded in the first detection stage. Therefore, Event 1 can be modeled by the expression within the first parenthesis at the right-hand side (RHS) of \eqref{outserv1}.  
\item[Event{\rm 2:}]\ The condition where at least one of the received $M_{1}$ signals' SNR of Service 1 (i.e., the minimum SNR) falls below $\gamma_{\rm T}$. In this case, the switching process occurs amongst the two service groups. In fact, all the $M_{1}$ signals are detected/decoded at the second detection stage, after the detection and removal of $M_{2}$ signals. Hence, Event 2 can be modeled by the expression within the second parenthesis at the right-hand side (RHS) of \eqref{outserv1}.
\end{description}

\begin{prop}
Outage probability of the $i^{\rm th}$ received stream ($1\leq i\leq M_{1}$) for Service 1 is presented in a closed-form expression as
\begin{align}
\nonumber
P^{(1)}_{{\rm out},i}(\gamma_{\rm th})=&F_{{\rm SNR}^{(1)}_{i}}(\gamma_{\rm th};M)\bar{F}_{{\rm SNR}^{(1)}_{\min}}(\gamma_{\rm T};M)\\
&+F_{{\rm SNR}^{(2)}_{i}}(\gamma_{\rm th};M_{1})F_{{\rm SNR}^{(1)}_{\min}}(\gamma_{\rm T};M),
\label{outserv1cl}
\end{align}
with
\begin{align}
F_{{\rm SNR}^{(1)}_{\min}}(\gamma_{\rm T};M)\triangleq 1-\prod^{M_{1}}_{i=1}\left(1-F_{{\rm SNR}^{(1)}_{i}}(\gamma_{\rm T};M)\right).
\label{FSNRmin}
\end{align}
\end{prop}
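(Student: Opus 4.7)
The plan is to start directly from the two-event decomposition in \eqref{outserv1}. The key structural observation is that the two events are \emph{mutually exclusive} by construction, because Event~1 is conditioned on $\{{\rm SNR}^{(1)}_{\min}>\gamma_{\rm T}\}$ while Event~2 is conditioned on the complementary event $\{{\rm SNR}^{(1)}_{\min}\leq \gamma_{\rm T}\}$. Therefore,
\begin{equation}
P^{(1)}_{{\rm out},i}(\gamma_{\rm th})
={\rm Pr}[\text{Event 1}]+{\rm Pr}[\text{Event 2}],
\nonumber
\end{equation}
so the problem reduces to evaluating each of the two joint probabilities separately and summing them.

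For Event~1, the Service~1 streams are actually detected in the first stage against a background of $M$ simultaneously transmitted streams, so the effective CDF of ${\rm SNR}^{(1)}_{i}$ is precisely Lemma~1 with $T_{1}=M$, yielding $F_{{\rm SNR}^{(1)}_{i}}(\gamma_{\rm th};M)$. The switching-trigger factor is the CCDF of the minimum of the $M_{1}$ Service~1 SNRs, $\bar{F}_{{\rm SNR}^{(1)}_{\min}}(\gamma_{\rm T};M)$. For Event~2, the switch has already occurred, so stream $i$ is decoded in the \emph{second} stage after the $M_{2}$ Service~2 streams have been cancelled; the stage-2 CDF of Lemma~2 applies, but now there are $M_{1}$ remaining streams rather than $M_{2}$, so we invoke Lemma~2 with $T_{2}=M_{1}$, giving $F_{{\rm SNR}^{(2)}_{i}}(\gamma_{\rm th};M_{1})$. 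Multiplying by $F_{{\rm SNR}^{(1)}_{\min}}(\gamma_{\rm T};M)$, substituting $\bar F=1-F$, and summing produces exactly \eqref{outserv1cl}. The auxiliary relation \eqref{FSNRmin} follows from the standard identity $F_{\min}(x)=1-\prod_{i=1}^{M_{1}}(1-F_{i}(x))$ applied to the $M_{1}$ individual Service~1 CDFs from Lemma~1.

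The main obstacle — and the point at which the argument is not literally exact — is the factorization of each joint probability into the product of marginals. The random variables ${\rm SNR}^{(1)}_{i}$, ${\rm SNR}^{(1)}_{\min}$ and ${\rm SNR}^{(2)}_{i}$ are all functionals of the same underlying estimated channel matrix $\hat{\mathbf{H}}\hat{\mathbf{P}}^{1/2}$, so they are statistically dependent in general; likewise the $M_{1}$ Service~1 stage-1 SNRs that enter \eqref{FSNRmin} are coupled through the shared pseudo-inverse. The plan is therefore to invoke the standard working assumption (used throughout this line of ZF multiuser-MIMO analysis) that the post-detection SNRs of distinct streams may be treated as independent, and to flag this explicitly at the beginning of the derivation. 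Beyond that, the proof is a clean bookkeeping exercise: identify which CDF applies to each stage, track the correct number of interfering streams ($M$ in stage~1, $M_{1}$ when Service~1 is pushed to stage~2), and combine using the disjoint-events decomposition above.
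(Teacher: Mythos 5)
Your proof is correct and follows essentially the same route as the paper: decompose the definition in \eqref{outserv1} into the two mutually exclusive switching events, insert the CDFs of Lemmas~1 and~2 with the appropriate stream counts ($M$ for the first stage, $M_{1}$ for Service~1 pushed to the second stage) via the notation of \eqref{notationcdf}, and obtain \eqref{FSNRmin} from the product rule for the minimum. The dependence issue you flag is not treated any more rigorously in the paper, which simply asserts mutual independence of the $M_{1}$ received signals and factors the joint probabilities accordingly, so your explicit statement of that working assumption is, if anything, more careful than the paper's own proof.
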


\begin{proof}
Using the definition of \eqref{outserv1cl}, the derivations of \eqref{cdfsnr1} and \eqref{cdfsnr2}, while invoking the notation of \eqref{notationcdf}, the desired result in \eqref{outserv1cl} is directly obtained. Due to the fact that the received $M_{1}$ signals are mutually independent, the CDF of the minimum SNR follows the expression of \eqref{FSNRmin}.  
\end{proof}

Following a similar strategy, outage probability of the $M_{2}$ streams for Service 2 is defined as
\begin{align}
\nonumber
&P^{(2)}_{{\rm out},i}(\gamma_{\rm th})\triangleq {\rm Pr}\Bigg[\left({\rm SNR}^{(2)}_{i}\leq \gamma_{\rm th}\ {\rm and }\ {\rm SNR}^{(1)}_{\min}>\gamma_{\rm T}\right)\\
& {\rm or }\ \left({\rm SNR}^{(1)}_{i}\leq \gamma_{\rm th}\ {\rm and }\ {\rm SNR}^{(1)}_{\min}\leq \gamma_{\rm T}\right)\Bigg],1\leq i\leq M_{2}.
\label{outserv2}
\end{align}

Equation \eqref{outserv2} indicates the following two exclusive events:

\begin{description}
\item[Event{\rm 3:}]\ The condition where all the received signals' SNR of Service 1 (i.e., $M_{1}$ signals) exceed the switching threshold $\gamma_{\rm T}$. In this case, the $M_{2}$ signals of Service 2 are detected/decoded at the second detection stage. Therefore, Event 3 can be modeled by the expression within the first parenthesis at the RHS of \eqref{outserv2}.  
\item[Event{\rm 4:}]\ The condition where at least one of the received $M_{1}$ signals' SNR of Service 1 (i.e., the minimum SNR) falls below $\gamma_{\rm T}$. In this case, all the $M_{2}$ signals are detected/decoded in the first detection stage. Hence, Event 4 can be modeled by the expression within the second parenthesis at the right-hand side (RHS) of \eqref{outserv1}.
\end{description}

\begin{prop}
Outage probability of the $i^{\rm th}$ received stream ($1\leq i\leq M_{2}$) for Service 2 is given in a closed-form expression by
\begin{align}
\nonumber
P^{(2)}_{{\rm out},i}(\gamma_{\rm th})=&F_{{\rm SNR}^{(2)}_{i}}(\gamma_{\rm th};M_{2})\bar{F}_{{\rm SNR}^{(1)}_{\min}}(\gamma_{\rm T};M)\\
&+F_{{\rm SNR}^{(1)}_{i}}(\gamma_{\rm th};M)F_{{\rm SNR}^{(1)}_{\min}}(\gamma_{\rm T};M).
\label{outserv2cl}
\end{align}
\end{prop}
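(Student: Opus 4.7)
The plan is to mirror exactly the argument used for Proposition 1, exploiting the symmetry that the switching mechanism creates between Services 1 and 2. Once the switching test $\{\mathrm{SNR}^{(1)}_{\min}>\gamma_{\mathrm{T}}\}$ is pulled out as the conditioning event, everything else reduces to the CDFs already established in Lemmas 1 and 2, subject only to relabelling the number of competing transmitters.

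First I would observe that the two clauses inside the outer ``or'' in \eqref{outserv2} are mutually exclusive, since they are conjoined respectively with $\{\mathrm{SNR}^{(1)}_{\min}>\gamma_{\mathrm{T}}\}$ and $\{\mathrm{SNR}^{(1)}_{\min}\leq \gamma_{\mathrm{T}}\}$, which form a partition of the sample space. Hence I can split
\begin{align*}
P^{(2)}_{\mathrm{out},i}(\gamma_{\mathrm{th}})
= \mathrm{Pr}\!\left[\mathrm{SNR}^{(2)}_{i}\!\leq\!\gamma_{\mathrm{th}},\,\mathrm{SNR}^{(1)}_{\min}\!>\!\gamma_{\mathrm{T}}\right]
+ \mathrm{Pr}\!\left[\mathrm{SNR}^{(1)}_{i}\!\leq\!\gamma_{\mathrm{th}},\,\mathrm{SNR}^{(1)}_{\min}\!\leq\!\gamma_{\mathrm{T}}\right].
\end{align*}
Each of the two joint probabilities is then to be factored as a product of its marginals, following exactly the same justification used in the proof of Proposition 1: the events describing the first-stage output (used to compute $\mathrm{SNR}^{(1)}_{\min}$) and those describing the second-stage output (after perfect cancellation of the Service 1 streams) decouple because the ZF projection onto the nullspace of $\hat{\mathbf{H}}_1$ is statistically independent of the original first-stage noise and error contributions — this is the same orthogonality argument that was invoked implicitly in \eqref{outserv1cl}.

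Next I would identify the correct parameters to substitute into the notation of \eqref{notationcdf}. Under Event 3, Service 2 is detected in the second stage after the $M_{1}$ streams have been removed, so only the $M_{2}$ competing transmitters matter and the appropriate CDF is $F_{\mathrm{SNR}^{(2)}_{i}}(\gamma_{\mathrm{th}};M_{2})$, obtained from Lemma 2. Under Event 4, after switching, Service 2 is detected in the first stage with all $M$ transmitters present; the structural form of the per-stream SNR is then the one analysed in Lemma 1, so the CDF evaluates to $F_{\mathrm{SNR}^{(1)}_{i}}(\gamma_{\mathrm{th}};M)$, where the superscript now tags the detection stage rather than the service, per the convention fixed in \eqref{notationcdf}. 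Combining these substitutions with the fact that $\mathrm{Pr}[\mathrm{SNR}^{(1)}_{\min}\leq \gamma_{\mathrm{T}}]=F_{\mathrm{SNR}^{(1)}_{\min}}(\gamma_{\mathrm{T}};M)$, defined via \eqref{FSNRmin} using mutual independence of the $M_{1}$ Service 1 streams, yields the claimed expression \eqref{outserv2cl}.

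The main obstacle I anticipate is the justification of the independence used in the factorisation within each event, in particular between $\mathrm{SNR}^{(2)}_{i}$ and $\mathrm{SNR}^{(1)}_{\min}$ in Event 3, because both depend on the common channel matrix $\hat{\mathbf{H}}$. However, under the Gaussian i.n.n.i.d. model and ZF processing, the second-stage SNR for Service 2 depends on $\hat{\mathbf{H}}$ only through the projection onto the column-space complement of $\hat{\mathbf{H}}_{1}$, which by the orthogonality principle for MMSE estimation combined with \eqref{channeljoint} is distributionally independent of the quantities governing the first-stage Service 1 SNRs; this is the same technical point that underlies the closed-form derivation in Appendix B and the analogous factorisation in Proposition 1, so I would cite it rather than re-derive it.
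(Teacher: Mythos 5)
Your proposal is correct and follows essentially the same route as the paper, whose proof of this proposition simply repeats the argument for \eqref{outserv1cl}: split \eqref{outserv2} into the two mutually exclusive events determined by the switching test on ${\rm SNR}^{(1)}_{\min}$, factor each joint probability into marginals, and insert the CDFs of Lemmas 1 and 2 with the interferer counts $M_{2}$ and $M$ per the convention in \eqref{notationcdf}, together with \eqref{FSNRmin}. Your added discussion of why the factorisation into marginals is legitimate goes beyond what the paper states explicitly (it simply assumes independence of the per-stream ZF SNRs, as it already did in \eqref{FSNRmin}), but it does not change the structure of the argument.
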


\begin{proof}
The proof follows the same lines of reasoning as for the derivation of \eqref{outserv1cl}.  
\end{proof}

A total outage event (of each service group) in spatial multiplexed systems is declared when at least one of the multiple simultaneously transmitted signals is in outage. As such, the total system outage probability for each service group is defined as
\begin{align}
P^{(j)}_{{\rm out},\rm Total}(\gamma_{\rm th})\triangleq 1-\prod^{M_{j}}_{i=1}\left[1-P^{(j)}_{{\rm out},i}(\gamma_{\rm th})\right], \ \ j=\{1,2\}.
\end{align}

\subsection{Massive MIMO Deployment}
Next, we study the scenario of a dense multiuser communication system, whereas the receiver is equipped with a vast number (in the order of tens or a few hundreds) of antenna elements. From the information theoretic standpoint, such a scenario can be modeled by assuming that $\{N\ {\rm and/or}\ M\} \rightarrow +\infty$.

In this asymptotic regime, the SNR expressions in \eqref{snr1} and\eqref{snr2} admit a more relaxed formulation. Specifically, using \cite[Eq. (37)]{j:NgoMatthaiou2013}, we get
\begin{align}
{\rm SNR}^{(1)}_{i}\xrightarrow{\{N,M\}\rightarrow +\infty} \frac{(N-M)a^{2}\hat{p}_{i}}{\sum^{M}_{j=1}(p_{j}-a^{2}\hat{p}_{j})},\ 1\leq i\leq M_{1},
\label{snr1asy}
\end{align}
and 
\begin{align}
{\rm SNR}^{(2)}_{i}\xrightarrow{\{N,M_{2}\}\rightarrow +\infty} \frac{(N-M_{2})a^{2}\hat{p}_{i}}{\sum^{M_{2}}_{j=1}(p_{j}-a^{2}\hat{p}_{j})},\ 1\leq i\leq M_{2},
\label{snr2asy}
\end{align}
with fixed ratios $N/M<+\infty$ and $N/M_{2}<+\infty$.

It is clear from \eqref{snr1asy} and \eqref{snr2asy} that the impact of noise is negligible in massive MIMO deployments. Only in the limited case when $N\gg M$ and $N\gg M_{2}$ (i.e., for an asymptotically high $N$ and fixed $M$), the above deterministic SNR equivalents grow up to infinity without any bound. 

\subsubsection{\underline{Optimal Value of $M_{2}$}}
The overall system performance is directly related to the number of users/transmit antennas of each service group, which are involved in a given time frame. Given that the priority is to preserve the transmission quality of Service 1, $M_{1}$ streams can reach an arbitrary number up to the upper bound $M_{1}\leq M$. In the latter extreme case, $M_{2}=0$ (since $M=M_{1}+M_{2}$), denoting that Service 1 fully occupies the system resources, resulting to no available space for Service 2.

On the other hand, arbitrarily setting the range of $M_{2}$ as $[M_{1}+1,N]$ (such that $M\leq N$) may not always be preferable and/or correspond to the optimal solution. This is because adding more transmit antennas, the available degrees-of-freedom at the received are being reduced, reflecting on a performance degradation for Service 1. Keeping in mind that the objective in massive MIMO deployments is to maintain the $M_{1}$ signals of Service 1 in the first detection stage, it is desirable to determine the optimal $M^{\star}_{2}$, such that $M^{\star}=M_{1}+M^{\star}_{2}$.

This optimization problem can be formulated as
\begin{align}
\nonumber
\max_{M_{2}>0}&\ \sum^{M_{2}}_{i=1}{\rm log}_{2}\left(1+{\rm SNR}^{(2)}_{i}\right)\\
{\rm s.t.}&\ \ {\rm SNR}^{(1)}_{\min}\geq \gamma_{\rm th},\ \ \ {\rm SNR}^{(1)}_{\min}=\min\{{\rm SNR}^{(1)}_{j}\}^{M_{1}}_{j=1}.
\label{optpro}
\end{align}
Further, in order to obtain a tractable solution and corresponding engineering insights, i.i.d. channel fading conditions regarding the $M_{2}$ received signals of Service 2 are assumed. Then, using \eqref{snr2asy} and after some simple manipulations, the above optimization problem can be reformulated as
\begin{align}
\nonumber
\max_{M_{2}>0}&\ M_{2}\:{\rm log}_{2}\left(1+\frac{\left(\frac{N}{M_{2}}-1\right)a^{2}\hat{p}}{p-a^{2}\hat{p}}\right)\\
{\rm s.t.}&\ \ \frac{\gamma_{\rm th}}{\left(\frac{\left(N-M_{1}-M_{2}\right)a^{2}\hat{p}}{(M_{1}+M_{2})(p-a^{2}\hat{p})}\right)}-1\leq 0.
\label{optpro1}
\end{align}

The optimization problem in \eqref{optpro1} is convex since the objective function and its constraint are strictly concave and strictly convex, respectively, since 
\begin{align}
\nonumber
&\frac{\partial^{2}}{\partial^{2}M_{2}}\left(M_{2}\:{\rm log}_{2}\left(1+\frac{\left(\frac{N}{M_{2}}-1\right)a^{2}\hat{p}}{p-a^{2}\hat{p}}\right)\right)=\\
&-\frac{(a^{2}\hat{p}N)^{2}}{M^{3}_{2}(p-a^{2}\hat{p})^{2}\left(1+\frac{\left(\frac{N}{M_{2}}-1\right)a^{2}\hat{p}}{p-a^{2}\hat{p}}\right)^{2}{\rm ln}(2)}<0,
\label{concave}
\end{align}
and
\begin{align}
\nonumber
&\frac{\partial^{2}}{\partial^{2}M_{2}}\left(\frac{\gamma_{\rm th}}{\left(\frac{\left(N-M_{1}-M_{2}\right)a^{2}\hat{p}}{(M_{1}+M_{2})(p-a^{2}\hat{p})}\right)}-1\right)=\\
&\frac{2(p-a^{2}\hat{p})\gamma_{\rm th}}{a^{2}\hat{p}(N-M_{1}-M_{2})^{2}}+\frac{2(M_{1}+M_{2})(p-a^{2}\hat{p})\gamma_{\rm th}}{a^{2}\hat{p}(N-M_{1}-M_{2})^{3}}>0,
\label{convex}
\end{align}
which implies that the Karush-Kuhn-Tucker conditions are both \emph{necessary and sufficient} for the optimal solution, whereas a unique value exists \cite{b:ConvexOptimization}. Motivated by the convexity of the problem, we introduce the following Lagrangian multiplier, termed $\lambda$, into the equation:
\begin{align}
\nonumber
\mathcal{L}\triangleq &M_{2}\:{\rm log}_{2}\left(1+\frac{\left(\frac{N}{M_{2}}-1\right)a^{2}\hat{p}}{p-a^{2}\hat{p}}\right)\\
&-\lambda \left(\frac{\gamma_{\rm th}}{\left(\frac{\left(N-M_{1}-M_{2}\right)a^{2}\hat{p}}{(M_{1}+M_{2})(p-a^{2}\hat{p})}\right)}-1\right), \ \lambda\geq 0,
\label{Lagrange}
\end{align}
and we set
\begin{align}
\frac{\partial \mathcal{L}}{\partial M_{2}}=0.
\end{align}
Solving the latter equality yields
\begin{align}
\nonumber
\lambda=&\frac{a^{2}\hat{p}(N-M_{1}-M^{\star}_{2})^{2}}{{\rm ln}(2)\gamma_{\rm th}(p-a^{2}\hat{p})\left(a^{2}\hat{p}(N-M^{\star}_{2})+M^{\star}_{2}(p-a^{2}\hat{p})\right)}\\
\nonumber
&\times \Bigg[{\rm ln}\left(1+\frac{\left(\frac{N}{M^{\star}_{2}}-1\right)a^{2}\hat{p}}{p-a^{2}\hat{p}}\right)\\
&\times \left(a^{2}\hat{p}(N-M^{\star}_{2})+M^{\star}_{2}(p-a^{2}\hat{p})\right)-a^{2}\hat{p}N\Bigg].
\label{lambda}
\end{align}
It is should be stated at this point that finding $M^{\star}_{2}$, which is the unique optimal value of \eqref{optpro1}, does not represent our primary goal, due the mode of the proposed operation. In fact, it is much more insightful to provide an optimality condition whereon the problem in \eqref{optpro1} is solved. Since $\lambda\geq 0$, the aforementioned optimality condition reads from \eqref{lambda} as
\begin{align}
\nonumber
&{\rm ln}\left(1+\frac{\left(\frac{N}{M^{\star}_{2}}-1\right)a^{2}\hat{p}}{p-a^{2}\hat{p}}\right)\\
&\times \left(a^{2}\hat{p}(N-M^{\star}_{2})+M^{\star}_{2}(p-a^{2}\hat{p})\right) \geq a^{2}\hat{p}N.
\label{optcond}
\end{align}

\begin{rem}
It is remarkable that the condition of \eqref{optcond} is independent of the number of transmitted streams (i.e., $M_{1}$) for Service 1 and the outage threshold $\gamma_{\rm th}$ (i.e., the minimum target on the achievable data rate for Service~1). On the other hand, it is directly related to the number of receive antennas and the channel fading statistics.
\end{rem}

Notice that \eqref{optcond} is a linear and rather simple expression since it includes elementary-only functions. Thus, it can be computed very rapidly and efficiently at the reception stage prior to each frame transmission. Actually, given $\{M_{1},M_{2},N\}$ and the corresponding channel statistics, derived from the training phase, the receiver applies \eqref{optcond}. If the condition is satisfied, then $M_{2}$ streams are allowed for transmission (to be detected at the second stage) without causing any problem to the transmission quality of Service 1. In the case when \eqref{optcond} is not satisfied, the receiver applies the former procedure for $M_{2}-1$. This is repeated until the optimality condition is satisfied. Finally, the value of $M^{\star}_{2}$ is dictated and used for the next frame transmission interval. 

For completeness of exposition, the proposed iterative approach is formalized in Algorithm~2.

\begin{algorithm}[t]
	\caption{Define the Optimal Value of $M_{2}$ (i.e., $M^{\star}_{2}$)}
	\begin{algorithmic}[1]
		 \INPUT{$M_{1}$, $M_{2}$, $N$, and the corresponding channel statistics}
		 \OUTPUT{$M^{\star}_{2}$}
		  \WHILE{$M_{2}>0$}
			\STATE{Compute the optimality condition as per \eqref{optcond}}
			\IF{\eqref{optcond} holds true}
				\STATE $M^{\star}_{2}=M_{2}$;
				\STATE End of the algorithm;
			\ELSE $\:\:M_{2}=M_{2}-1$
				\STATE Go to Step 2;
			\ENDIF
		\ENDWHILE 
	\end{algorithmic}
\end{algorithm}

\subsubsection{\underline{Coherence Time}}
The effective coherence time of the system represents quite an important metric, which determines the overall system performance. It denotes the amount of time where the channel fading conditions remain unchanged; thus, it determines the efficiency of the training phase prior to data transmission/reception. For the $n^{\rm th}$ time instant, SNR decays with $n$. Hence, the usable transmission frame duration should always satisfy ${\rm SNR}^{(1)}_{\min}[n]\geq \gamma_{\rm th}$, such that ${\rm SNR}^{(1)}_{\min}[T_{\max}]=\gamma_{\rm th}$, where $T_{\max}$ denotes the coherence time in terms of the maximum number of consecutive transmitted symbols.\footnote{Recall that the priority is to preserve the transmission quality of Service 1, while optimizing the detection/decoding processing delay at the same time. It turns out that such criteria are met by allocating Service 1 to the first detection stage (in massive MIMO).}

To obtain mathematical tractability, the autoregressive model of order one was used in the previous analysis. Yet, according to \eqref{jakesorderm} and \eqref{errorderm}, we can easily extend it to the general scenario of order $T_{\max}$. Doing so, \eqref{snr1asy} becomes
\begin{align}
{\rm SNR}^{(1)}_{\min}[T_{\max}]\xrightarrow{\{N,M\}\rightarrow +\infty} \frac{(N-M)a^{2 T_{\max}}\hat{p}_{\min}}{\sum^{M}_{i=1}(p_{i}-a^{2 T_{\max}}\hat{p}_{i})}=\gamma_{\rm th}.
\label{snr1asyy}
\end{align}
Equation \eqref{snr1asyy} can be further simplified since, from \eqref{esterror}, it holds that $\hat{p}_{i}=p_{i}$ $\forall i$, when $M\rightarrow +\infty$. 

\begin{cor}
The effective channel coherence time in terms of $T_{\max}$, when $N$ and/or $M\rightarrow +\infty$, yields as
\begin{align}
\nonumber
&\frac{(N-M)a^{2 T_{\max}}\hat{p}_{\min}}{(1-a^{2 T_{\max}})\left(\sum^{M}_{i=1}\hat{p}_{i}\right)}=\gamma_{\rm th}\\
\Longleftrightarrow &T_{\max}=\left\lfloor \frac{{\rm ln}\left(\frac{\gamma_{\rm th}\left(\sum^{M}_{i=1}\hat{p}_{i}\right)}{\gamma_{\rm th}\left(\sum^{M}_{i=1}\hat{p}_{i}\right)+\hat{p}_{\min}(N-M)}\right)}{2{\rm ln}(a)}\right\rfloor.
\label{snr1asyyyy}
\end{align}
In the simplified scenario when all the involved link distances are identical, a simple closed-form expression of $T_{\max}$ is given by
\begin{align}
T_{\max}=\left\lfloor \frac{{\rm ln}\left(\frac{M \gamma_{\rm th}}{(N-M+M\gamma_{\rm th})}\right)}{2{\rm ln}(a)}\right\rfloor.
\label{tmax}
\end{align}
\end{cor}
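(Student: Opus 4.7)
The plan is to start from the asymptotic SNR expression in \eqref{snr1asyy} and carry out two simplifications followed by elementary algebra. First, I would invoke the limiting behavior of the MMSE channel estimate error from \eqref{esterror}: as $M\to+\infty$, the term $1/(M p_{T})$ in the denominator vanishes, so $\hat p_{i}\to p_{i}$ for every $i$. Substituting this into the denominator of \eqref{snr1asyy} factors out $(1-a^{2T_{\max}})$, since
\begin{align*}
\sum_{i=1}^{M}\bigl(p_{i}-a^{2T_{\max}}\hat p_{i}\bigr)=\sum_{i=1}^{M}\bigl(\hat p_{i}-a^{2T_{\max}}\hat p_{i}\bigr)=(1-a^{2T_{\max}})\sum_{i=1}^{M}\hat p_{i},
\end{align*}
which yields exactly the first line of \eqref{snr1asyyyy}.

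Next, I would solve that equality for $T_{\max}$. Setting $x\triangleq a^{2T_{\max}}$ turns it into a linear equation in $x$: cross-multiplying gives $(N-M)\hat p_{\min}\, x=\gamma_{\rm th}\,(1-x)\sum_{i=1}^{M}\hat p_{i}$, from which
\begin{align*}
x=\frac{\gamma_{\rm th}\sum_{i=1}^{M}\hat p_{i}}{\gamma_{\rm th}\sum_{i=1}^{M}\hat p_{i}+\hat p_{\min}(N-M)}.
\end{align*}
Taking natural logarithms of both sides, dividing by $2\ln(a)$, and applying the floor function (since $T_{\max}$ counts symbol periods and must be integer) produces the second line of \eqref{snr1asyyyy}. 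Note that both $\ln(a)$ and the logarithm of the fraction above are negative (the fraction lies strictly in $(0,1)$ and $a=J_{0}(2\pi f_{D}T_{s})<1$), so the quotient is positive, as required.

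Finally, for the identical-distance specialization, I would simply set $\hat p_{i}=\hat p$ for all $i$, so $\hat p_{\min}=\hat p$ and $\sum_{i=1}^{M}\hat p_{i}=M\hat p$. Substituting these into the general result cancels $\hat p$ from numerator and denominator of the logarithm's argument and yields \eqref{tmax} directly.

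The main obstacle is not the algebra but the justification of the first simplification: one must argue cleanly that the MMSE expression in \eqref{esterror} forces $\hat p_{i}\to p_{i}$ in the relevant massive-MIMO limit, so that the numerator and denominator of the asymptotic SNR share the common factor $\hat p_{i}$. Once this observation is in place, the remaining derivation is a one-line inversion of an exponential equation, and the floor operation is the only concession to the integer nature of the symbol count.
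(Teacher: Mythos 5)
Your proposal is correct and follows essentially the same route as the paper: the paper likewise uses \eqref{esterror} to argue $\hat{p}_{i}\to p_{i}$ as $M\to+\infty$, factors $(1-a^{2T_{\max}})$ out of the denominator of \eqref{snr1asyy}, solves the resulting linear equation in $a^{2T_{\max}}$ by taking logarithms and flooring, and specializes to identical $\hat{p}_{i}$ for \eqref{tmax}. Your added sign check (both logarithms negative since $a<1$ and the fraction lies in $(0,1)$) is a harmless, correct refinement of the same argument.
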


\begin{rem}
By closely observing \eqref{tmax}, the effective channel coherence time increases with $N$ and $a$ (i.e., a slower channel aging effect), while it decreases with $M$; all in a logarithmic scale. 
\label{rem3} 
\end{rem}

\section{Numerical Results}
\label{Numerical Results}
In this section, numerical results are presented and cross-compared with Monte-Carlo simulations to assess our theoretical findings. Line-curves and circle-marks denote the analytical and simulation results, respectively. Also, solid and dashed lines denote the analytical results regarding the primary (Service~1) and secondary nodes (Service~2), correspondingly. There is a perfect match between the former and latter results and, therefore, the accuracy of the presented analysis is verified. Henceforth, without loss of generality, we set $\omega_{i}=4\:\forall i$, corresponding to a classical macro-cell urban environment, while the upper bound on residual noise uncertainty is $L=2$dB. Also, unless otherwise specified, $\gamma_{\rm th}=\gamma_{\rm T}=1$, i.e., reflecting a minimum target data rate of $1$bps/Hz for Service~1, and $p_{\rm T}/N_{0}=10$dB.   

In Fig.~\ref{fig2}, the system outage probability is presented vs. various link distances, assuming i.i.d. channel fading conditions for all the involved nodes. For the far-distant transmission regions, the primary system (Service~1) always experiences a better outage performance in comparison to the secondary system (Service~2), as it should be. This occurs because, in these regions, the switching scheme allocates the primary signals in the second detection stage (on average) due to their relatively low received SNR. On the other hand, the outage performance is increased for both service groups as the link distances are reduced. In particular, the secondary service outperforms the primary one, in these regions, since the corresponding signals are being detected in the second stage, due to the higher received SNR conditions. Importantly, the presence of a more intense channel aging effect (e.g., higher user mobility) results to a dramatic performance degradation of both services. A typical outage requirement for most practical applications (e.g., $\leq 10^{-2}$) is satisfied when $d\leq 150$m. However, such a restricted coverage area denotes a classical region of operation for most heterogeneous modern network systems, especially in dense urban environments.
\begin{figure}[!t]
\centering
\includegraphics[trim=1.5cm 0.5cm 1.5cm 0.0cm, clip=true,totalheight=
0.5\textheight]{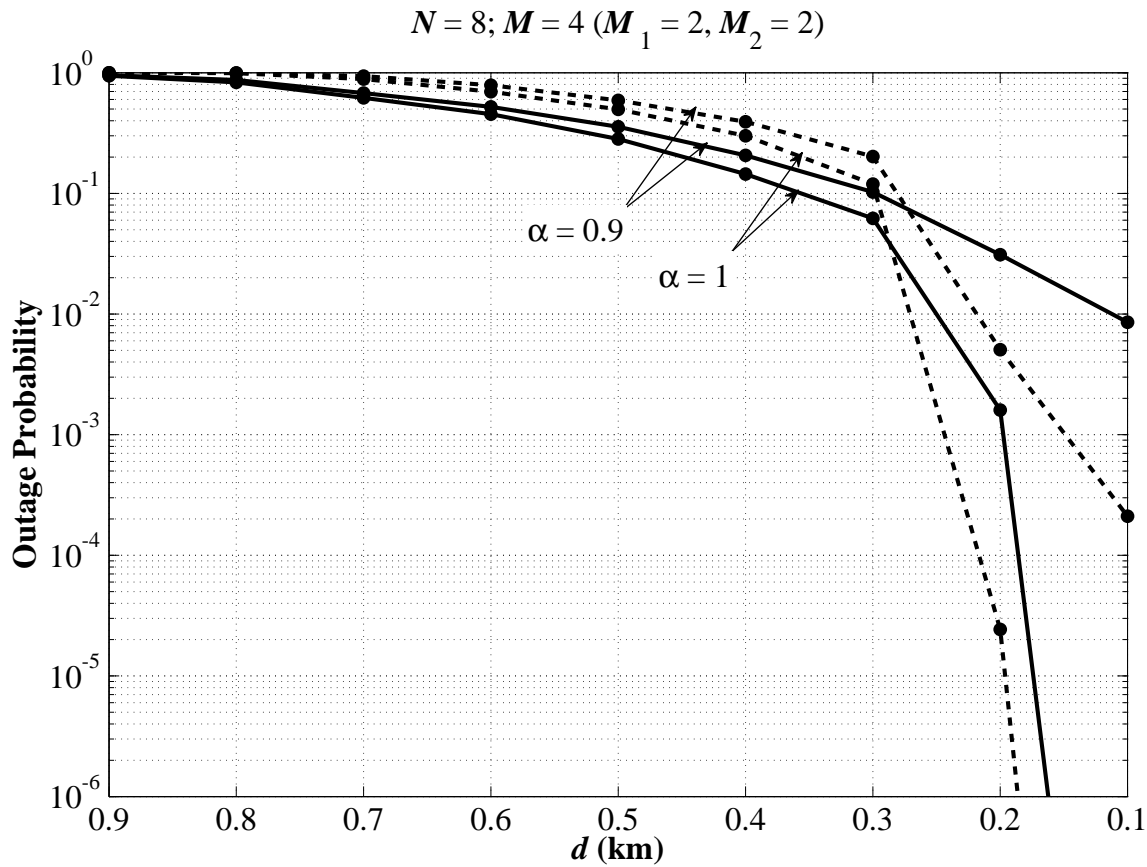}
\caption{Outage probability vs. various link distances for i.i.d. channel fading conditions.}
\label{fig2}
\end{figure}

Figure~\ref{fig3} illustrates the scenario of a high-rate system in a closer vicinity under i.n.n.i.d. channel fading conditions. In fact, this scenario can be visualized by considering a single primary and a single secondary user, each equipped with two transmit antennas and arbitrary distances with respect to the receiver. The system outage performance is presented vs. different number of transmit antennas per service. Recall that $M_{2}=M-M_{1}$, while $M=4$ is fixed. The presence of a more intense channel aging effect emphatically impacts on the outage performance of both services, as expected. It is also noteworthy that a higher $\alpha$ value (i.e., a slower channel aging) reflects on a more accurate instantaneous CSI, thus a more effective detection. As $M_{1}$ increases, the possibility of at least one of the $M_{1}$ SNRs for Service~1 falling below $\gamma_{\rm T}$ is higher. Doing so, the probability that are being detected in the second stage also increases (which in turn results to a better outage performance). However, as the $\alpha$ value is reduced (i.e., the erroneous CSI increases) the estimated SNR of all the received signals experiences a more dramatic corruption, yielding to bad switching decisions more often. In the latter case (e.g., when $\alpha=0.6$), a higher $M_{1}$ value reflects on a reduced outage performance. Thereby, it is obvious that the channel aging effect plays quite a critical role in the overall system performance.
\begin{figure}[!t]
\centering
\includegraphics[trim=1.5cm 0.5cm 1.5cm 0.0cm, clip=true,totalheight=
0.5\textheight]{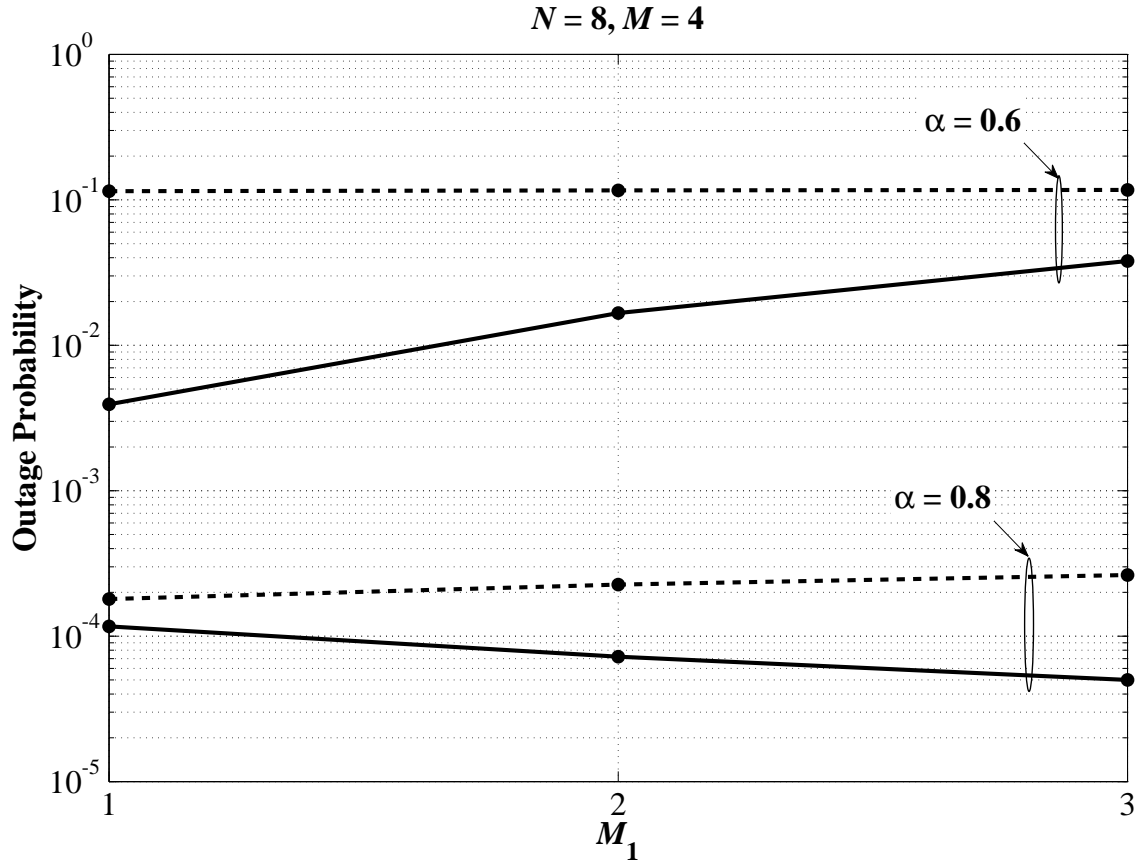}
\caption{Outage probability vs. various $M_{1}$ values for i.n.n.i.d. channel fading conditions. Particularly, the scenario of co-located antennas of a single primary and a single secondary node is illustrated. The link distances from the primary and secondary nodes to the receiver are respectively given by $\{d_{i}\}^{M_{1}}_{i=1}=25$m. and $\{d_{j}\}^{M_{2}}_{j=1}=35$m. Also, $\gamma_{\rm T}=7$ is assumed reflecting to a minimum target data rate of 3bps/Hz for the primary service.}
\label{fig3}
\end{figure}

In Fig.~\ref{fig4}, the ideal scenario of a perfect and non-delayed CSI is assumed (i.e., when $\alpha=1$ and $\hat{p}_{i}=p_{i}$ $\forall i$). The system outage performance vs. different input SNR values is illustrated. The `crossing' of the illustrated outage curves between the two services is due to the proposed switching scheme at the detector, as previously analyzed and described. Clearly, the outage performance improves for higher input SNR values. Yet, such an ideal behavior is not realistic in practice, whereas it can only serve as a performance benchmark.\footnote{For realistic imperfect and/or delayed CSI conditions, outage probability reaches to a certain non-zero performance floor in high SNR regions \cite{j:MiridakisTsiftsisRowell2017,j:MiridakisTsiftsisAlexandropoulos2017}.}  
\begin{figure}[!t]
\centering
\includegraphics[trim=1.5cm 0.0cm 1.5cm 0.0cm, clip=true,totalheight=
0.5\textheight]{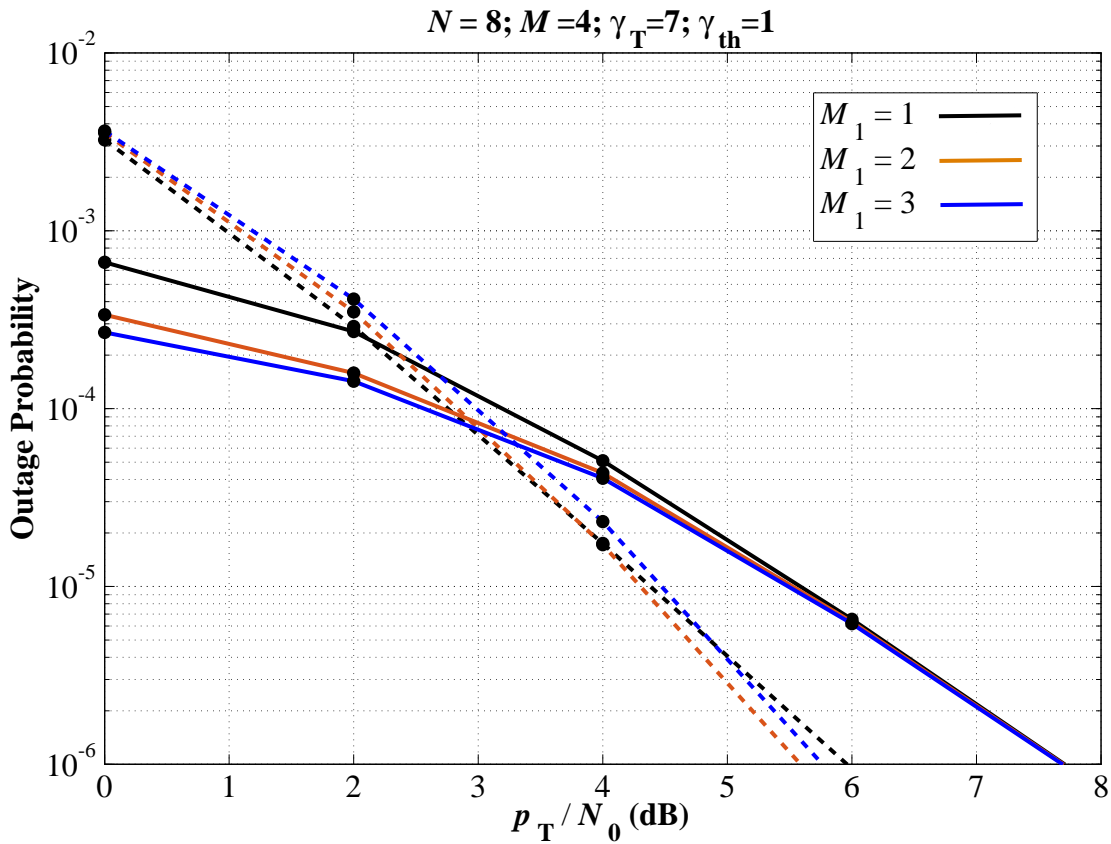}
\caption{Outage probability vs. various $p_{\rm T}/N_{0}$ values for i.n.n.i.d. channel fading conditions. Particularly, the scenario of co-located antennas of a single primary and a single secondary node is illustrated as in Fig. \ref{fig4}. The link distances from the primary and secondary nodes to the receiver are respectively given by $\{d_{i}\}^{M_{1}}_{i=1}=0.95$km. and $\{d_{j}\}^{M_{2}}_{j=1}=0.75$km. Also, an ideal CSI is assumed (i.e., $\alpha=1$ and $\hat{p}_{i}=p_{i}$ $\forall i$).}
\label{fig4}
\end{figure}

Figure~\ref{fig5} compares the outage performance of the primary service with or without the presence of secondary users. The selected scenario assumes a low antenna range at the receiver (i.e., $N=4$) and i.i.d. channel fading conditions in order to straightforwardly extract useful outcomes. Also, a high-rate system (a minimum target on data rate is set as $4$bps/Hz) is assumed, operating in a low SNR region. Although the system outage performance for the primary service is reduced when adding one or two secondary transmitting signals, it still remains in relatively low levels. In fact, outage probability is quite small for restricted coverage heterogeneous networks (e.g., when $d\leq 30$m). It should be stated that a much better outage performance has been reached for higher SNR regions and/or when more receive antennas are included.
\begin{figure}[!t]
\centering
\includegraphics[trim=1.5cm 0.0cm 1.5cm 0.0cm, clip=true,totalheight=
0.5\textheight]{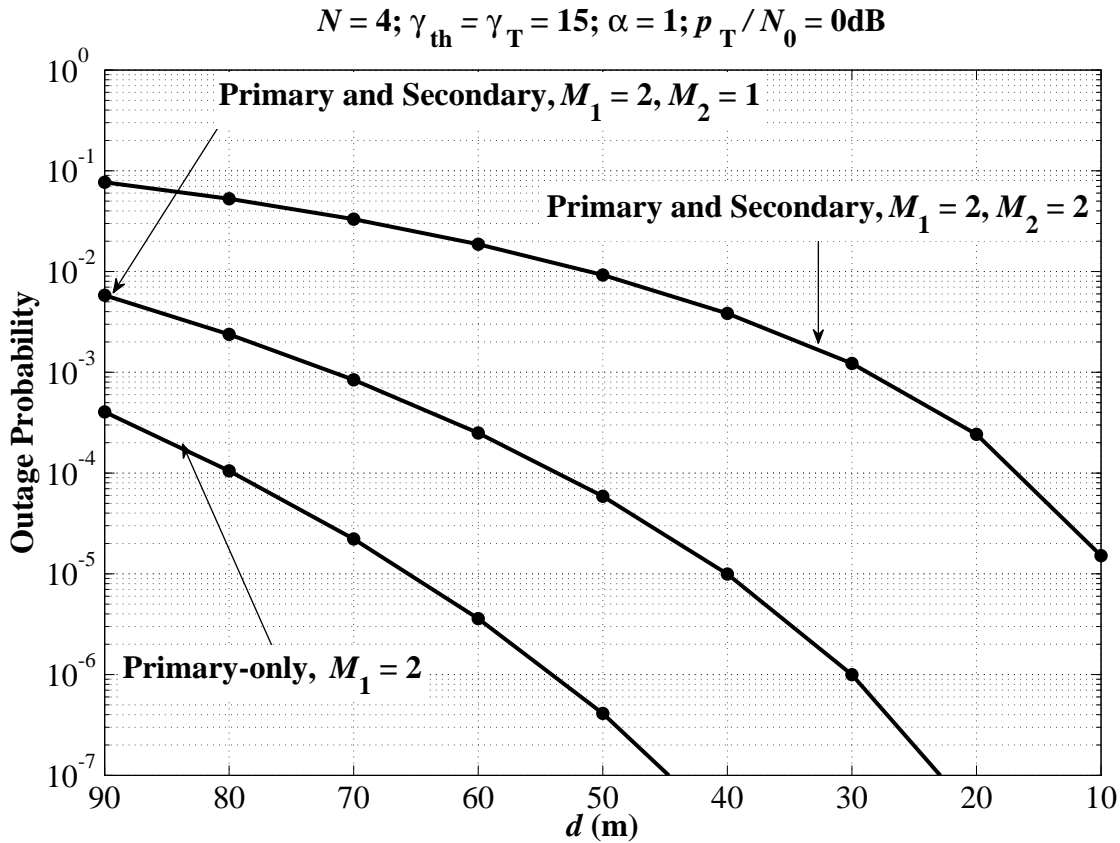}
\caption{Outage probability of the primary service (Service~1) vs. various link distances for i.i.d. channel fading conditions, where $d=100$m.}
\label{fig5}
\end{figure}

We proceed on the numerical results of a massive MIMO system, when $\{N,M\}\rightarrow +\infty$. Table~\ref{table} provides the maximum admissible number of simultaneously transmitted secondary signals (by utilizing Algorithm~2) for a given $\{N,\alpha\}$ set. Obviously, $M^{\star}_{2}$ is not always equal to $N-M_{1}$, especially when $\alpha$ is reduced indicating an increased channel aging effect. Hence, even in massive MIMO systems with favorable propagation conditions (i.e., when fast fading is averaged out), finding the exact $M^{\star}_{2}\leq N-M_{1}$ represents a critical issue.   

\begin{table}[t]
\caption {Optimal Value of Secondary Transmit Antennas} 
\begin{center}
\begin{tabular}{l| l l l l l l}\hline
$\boldsymbol{N=}$ & $\mathbf{16}$ &$\mathbf{32}$ & $\mathbf{64}$& $\mathbf{128}$ &$\mathbf{256}$ & $\mathbf{512}$\\\hline
 $\:\boldsymbol{\alpha=0.9999}$ &$6$ &$22$ &$54$ &$116$&$232$&$464$ \\
 $\:\boldsymbol{\alpha=0.8}$ &$6$ &$13$ &$27$ &$54$&$108$&$217$ \\
 $\:\boldsymbol{\alpha=0.6}$ &$5$ &$10$ &$20$ &$40$&$80$&$160$ \\\hline
\end{tabular}
\end{center}
*We consider that $M_{1}=10$; hence, the upper bound of $M_{2}$ is $N-10$.
\label{table}
\end{table}

Finally, the effective channel coherence time is presented in Fig.~\ref{fig6} for various system setups. Recall that $T_{\max}$ denotes the maximum number of transmit time symbols, where CSI remains unchanged between two consecutive training phases. The impact of channel aging dramatically affects the channel coherence time and, thereby, the actual system throughput. Additionally, it is obvious that all the provided statements in Remark~\ref{rem3} are verified.
\begin{figure}[!t]
\centering
\includegraphics[trim=1.5cm 0.0cm 1.5cm 0.0cm, clip=true,totalheight=
0.5\textheight]{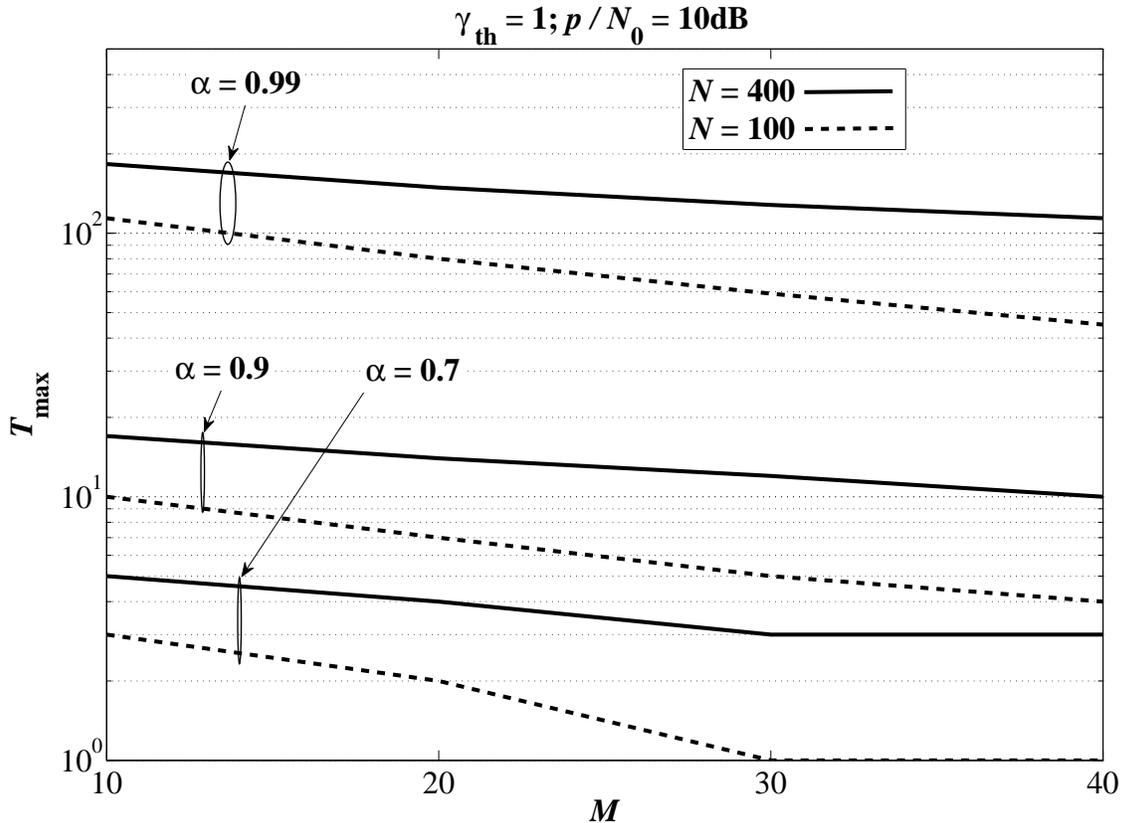}
\caption{Effective channel coherence time in terms of maximum number of consecutive transmit symbols $T_{\max}$ vs. different number of transmitters for a massive MIMO antenna array.}
\label{fig6}
\end{figure}

\section{Conclusion}
\label{Conclusion}
A new detection scheme for multiuser MIMO systems with different priorities was thoroughly studied and presented. A particular emphasis was given in CR-enabled communication, where the primary and secondary users can be grouped together in two distinct service groups, correspondingly. A two-stage detection scheme was proposed, which implements a certain switching between the received signals of each service group. The main objective is the performance improvement of both services and the overall multiuser diversity enhancement. Further, the scenario of massive MIMO systems was studied, where the maximum number of admissible secondary users is obtained via a necessary and sufficient optimality condition. Finally, some useful engineering insights were manifested, such as the impact of imperfect/delayed CSI and the size of MIMO antenna array onto the system performance.

\appendix
\subsection{Derivation of (\ref{cdfsnr1})}
\label{appa}
\numberwithin{equation}{subsection}
\setcounter{equation}{0}
It holds from \eqref{snr1} that 
\begin{align}
{\rm SNR}^{(1)}_{i}=\frac{\left\|\left[\left(\hat{\mathbf{H}}\hat{\mathbf{P}}^{\frac{1}{2}}\right)^{\dagger}\right]_{i}\right\|^{-2}}{\frac{\left\|\left[\left(\hat{\mathbf{H}}\hat{\mathbf{P}}^{\frac{1}{2}}\right)^{\dagger}\right]_{i}\mathbf{E}\right\|^{2}}{\left\|\left[\left(\hat{\mathbf{H}}\hat{\mathbf{P}}^{\frac{1}{2}}\right)^{\dagger}\right]_{i}\right\|^{2}}+N_{0}}\overset{\text{d}}=\frac{X_{i}}{Y_{i}+N_{0}},\ \ i\in [1,M_{1}],
\label{snr1distr}
\end{align}
where the second equality (in distribution) directly arises from \cite[Proposition 1]{j:NgoMatthaiou2013}, $X_{i}$ follows a chi-squared distribution with $2(N-M+1)$ degrees-of-freedom, and $Y_{i}$ denotes the sum of i.n.n.i.d. exponential RVs. More specifically, we have that
\begin{align}
F_{X_{i}}(x)=1-\exp\left(-\frac{x}{a^{2}\hat{p}_{i}}\right)\sum^{N-M}_{k=0}\frac{x^{k}}{k!(a^{2}\hat{p}_{i})^{k}},
\label{XiCDF}
\end{align}
and
\begin{align}
f_{Y_{i}}(y)=\sum^{\rho(A)}_{v=1}\sum^{\tau_{v}(A)}_{j=1}\frac{\mathcal{X}_{v,j}(A)y^{j-1}\exp\left(-\frac{y}{\left\langle p_{v}-a^{2}\hat{p}_{v}\right\rangle}\right)}{(j-1)!\left(\left\langle p_{v}-a^{2}\hat{p}_{v}\right\rangle\right)^{j}}.
\label{YiPDF}
\end{align}
Hence, according to \eqref{snr1distr}, we get
\begin{align}
F_{{\rm SNR}^{(1)}_{i}}(\gamma)=\int^{\infty}_{0}F_{X_{i}}(\gamma(y+N_{0}))f_{Y_{i}}(y)dy,\ \ i\in [1,M_{1}].
\label{snr1distrCDF}
\end{align}
Then, inserting \eqref{XiCDF} and \eqref{YiPDF} into \eqref{snr1distrCDF}, utilizing the binomial expansion, and after performing some straightforward manipulations, we arrive at the desired result of \eqref{cdfsnr1}.

\subsection{Derivation of (\ref{cdfsnr2})}
\label{appb}
\numberwithin{equation}{subsection}
\setcounter{equation}{0}
Following a similar methodology as in Appendix \ref{appa}, while conditioning on $\beta$, the CDF of SNR for the $i^{\rm th}$ stream of Service 2 is derived as
\begin{align}
\nonumber
&F_{{\rm SNR}^{(2)}_{i}|\beta}(\gamma|\beta)=\\
&1-\sum^{N-M_{2}}_{k=0}\sum^{\rho(A')}_{v=1}\sum^{\tau_{v}(A')}_{j=1}\sum^{k}_{l=0}\frac{\Psi_{M_{2}}\left(\frac{\hat{N}_{0}}{\beta}\right)^{k-l}\gamma^{k}\exp\left(-\frac{\hat{N}_{0}\gamma}{\beta a^{2}\hat{p}_{i}}\right)}{\left(\frac{\gamma}{a^{2}\hat{p}_{i}}+\frac{1}{\left\langle p_{v}-a^{2}\hat{p}_{v}\right\rangle}\right)^{j+l}}.
\label{cdfsnr2cond}
\end{align}
Thereby, the corresponding unconditional CDF yields as
\begin{align}
F_{{\rm SNR}^{(2)}_{i}}(\gamma)&=\int^{10^{\frac{L}{10}}}_{10^{-\frac{L}{10}}}F_{{\rm SNR}^{(2)}_{i}|\beta}(\gamma|z)f_{\beta}(z)dz.
\label{cdfsnr2ucond}
\end{align}
Inserting \eqref{cdfsnr2cond} and \eqref{bpdf} into \eqref{cdfsnr2ucond}, while using \cite[Eq. (3.351.2)]{tables}, the desired result of \eqref{cdfsnr2} is extracted.

\ifCLASSOPTIONcaptionsoff
  \newpage
\fi

\bibliographystyle{IEEEtran}
\bibliography{IEEEabrv,References}

\begin{thebibliography}{10}
\providecommand{\url}[1]{#1}
\csname url@samestyle\endcsname
\providecommand{\newblock}{\relax}
\providecommand{\bibinfo}[2]{#2}
\providecommand{\BIBentrySTDinterwordspacing}{\spaceskip=0pt\relax}
\providecommand{\BIBentryALTinterwordstretchfactor}{4}
\providecommand{\BIBentryALTinterwordspacing}{\spaceskip=\fontdimen2\font plus
\BIBentryALTinterwordstretchfactor\fontdimen3\font minus
  \fontdimen4\font\relax}
\providecommand{\BIBforeignlanguage}[2]{{%
\expandafter\ifx\csname l@#1\endcsname\relax
\typeout{** WARNING: IEEEtran.bst: No hyphenation pattern has been}%
\typeout{** loaded for the language `#1'. Using the pattern for}%
\typeout{** the default language instead.}%
\else
\language=\csname l@#1\endcsname
\fi
#2}}
\providecommand{\BIBdecl}{\relax}
\BIBdecl

\bibitem{j:MietznerSchober2009}
J.~Mietzner, R.~Schober, L.~Lampe, W.~H. Gerstacker, and P.~A. Hoeher,
  ``Multiple-antenna techniques for wireless communications - {A} comprehensive
  literature survey,'' \emph{{IEEE} Commun. Surveys Tuts.}, vol.~11, no.~2, pp.
  87--105, Second 2009.

\bibitem{j:NgoLarsson2013}
H.~Q. Ngo, E.~G. Larsson, and T.~L. Marzetta, ``Energy and spectral efficiency
  of very large multiuser {MIMO} systems,'' \emph{{IEEE} Trans. Commun.},
  vol.~61, no.~4, pp. 1436--1449, Apr. 2013.

\bibitem{j:massive}
E.~Larsson, O.~Edfors, F.~Tufvesson, and T.~Marzetta, ``Massive {MIMO} for next
  generation wireless systems,'' \emph{{IEEE} Commun. Mag.}, vol.~52, no.~2,
  pp. 186--195, Feb. 2014.

\bibitem{j:bjornson}
E.~Bj{\"o}rnson, E.~G. Larsson, and T.~L. Marzetta, ``Massive {MIMO}: {T}en
  myths and one critical question,'' \emph{{IEEE} Commun. Mag.}, vol.~54,
  no.~2, pp. 114--123, Feb. 2016.

\bibitem{j:Haykin}
S.~Haykin, ``Cognitive radio: Brain-empowered wireless communications,''
  \emph{{IEEE} J. Sel. Areas Commun.}, vol.~23, no.~2, pp. 201--220, Feb. 2005.

\bibitem{j:LiangChen2011}
Y.~C. Liang, K.~C. Chen, G.~Y. Li, and P.~Mahonen, ``Cognitive radio networking
  and communications: {A}n overview,'' \emph{{IEEE} Trans. Veh. Technol.},
  vol.~60, no.~7, pp. 3386--3407, Sep. 2011.

\bibitem{j:NaeemRehmani2017}
A.~Naeem, M.~H. Rehmani, Y.~Saleem, I.~Rashid, and N.~Crespi, ``Network coding
  in cognitive radio networks: {A} comprehensive survey,'' \emph{{IEEE} Commun.
  Surveys Tuts.}, To be published, 2017.

\bibitem{j:GoldsmithJafar2009}
A.~Goldsmith, S.~A. Jafar, I.~Maric, and S.~Srinivasa, ``Breaking spectrum
  gridlock with cognitive radios: {A}n information theoretic perspective,''
  \emph{Proc. {IEEE}}, vol.~97, no.~5, pp. 894--914, May 2009.

\bibitem{j:SimeoneStanojev2008}
O.~Simeone, I.~Stanojev, S.~Savazzi, Y.~Bar-Ness, U.~Spagnolini, and
  R.~Pickholtz, ``Spectrum leasing to cooperating secondary ad hoc networks,''
  \emph{{IEEE} J. Sel. Areas Commun.}, vol.~26, no.~1, pp. 203--213, Jan. 2008.

\bibitem{j:SuMatyjas2012}
W.~Su, J.~D. Matyjas, and S.~Batalama, ``Active cooperation between primary
  users and cognitive radio users in heterogeneous ad-hoc networks,''
  \emph{{IEEE} Trans. Signal Process.}, vol.~60, no.~4, pp. 1796--1805, Apr.
  2012.

\bibitem{j:MannaLouie2011}
R.~Manna, R.~H.~Y. Louie, Y.~Li, and B.~Vucetic, ``Cooperative spectrum sharing
  in cognitive radio networks with multiple antennas,'' \emph{{IEEE} Trans.
  Signal Process.}, vol.~59, no.~11, pp. 5509--5522, Nov. 2011.

\bibitem{j:SongHasna2013}
S.~H. Song, M.~O. Hasna, and K.~B. Letaief, ``Prior zero forcing for cognitive
  relaying,'' \emph{{IEEE} Trans. Wireless Commun.}, vol.~12, no.~2, pp.
  938--947, Feb. 2013.

\bibitem{j:ZhengSong2013}
G.~Zheng, S.~Song, K.~K. Wong, and B.~Ottersten, ``Cooperative cognitive
  networks: {O}ptimal, distributed and low-complexity algorithms,''
  \emph{{IEEE} Trans. Signal Process.}, vol.~61, no.~11, pp. 2778--2790, Jun.
  2013.

\bibitem{j:ZhengKrikidis2013}
G.~Zheng, I.~Krikidis, and B.~Ottersten, ``Full-duplex cooperative cognitive
  radio with transmit imperfections,'' \emph{{IEEE} Trans. Wireless Commun.},
  vol.~12, no.~5, pp. 2498--2511, May 2013.

\bibitem{c:YuanShi2013}
X.~Yuan, Y.~Shi, Y.~T. Hou, W.~Lou, and S.~Kompella, ``{UPS}: {A} united
  cooperative paradigm for primary and secondary networks,'' in \emph{Proc.
  {IEEE} Int. Conf. Mob. Ad-Hoc Sensor Sys. (MASS)}, Hangzhou, China, Oct.
  2013, pp. 78--85.

\bibitem{j:YuanTian2016}
X.~Yuan, F.~Tian, Y.~T. Hou, W.~Lou, H.~D. Sherali, S.~Kompella, and J.~H.
  Reed, ``On throughput region for primary and secondary networks with
  node-level cooperation,'' \emph{{IEEE} J. Sel. Areas Commun.}, vol.~34,
  no.~10, pp. 2763--2775, Oct. 2016.

\bibitem{j:MiridakisVergados2013}
N.~I. Miridakis and D.~D. Vergados, ``A survey on the successive interference
  cancellation performance for single-antenna and multiple-antenna {OFDM}
  systems,'' \emph{{IEEE} Commun. Surveys Tuts.}, vol.~15, no.~1, pp. 312--335,
  First 2013.

\bibitem{tables}
I.~S. Gradshteyn and I.~M. Ryzhik, \emph{Table of Integrals, Series, and
  Products}.\hskip 1em plus 0.5em minus 0.4em\relax Academic Press, 2007.

\bibitem{j:Coppersmith1990}
D.~Coppersmith and S.~Winograd, ``Matrix multiplication via arithmetic
  progressions,'' \emph{Journal of Symbolic Computation}, vol.~9, no.~3, pp.
  251--280, Aug. 1990.

\bibitem{b:stevenkayesttheory}
S.~M. Kay, \emph{Fundamentals of statistical signal processing: Estimation
  theory}.\hskip 1em plus 0.5em minus 0.4em\relax Englewood Cliffs, NJ:
  Prentice Hall, 1993.

\bibitem{j:Truong_massive_Mimo}
K.~T. Truong and R.~W. Heath, ``Effects of channel aging in massive {MIMO}
  systems,'' \emph{IEEE/KICS J. Commun. Netw.}, vol.~15, no.~4, pp. 338--351,
  Aug. 2013.

\bibitem{jakes}
K.~E. Baddour and N.~C. Beaulieu, ``Autoregressive modeling for fading channel
  simulation,'' \emph{{IEEE} Trans. Wireless Commun.}, vol.~4, no.~4, pp.
  1650--1662, Jul. 2005.

\bibitem{j:ShinWin2008}
H.~Shin and M.~Z. Win, ``{MIMO} diversity in the presence of double
  scattering,'' \emph{{IEEE} Trans. Inf. Theory}, vol.~54, no.~7, pp.
  2976--2996, Jul. 2008.

\bibitem{j:Tandra}
R.~Tandra and A.~Sahai, ``{SNR} walls for signal detection,'' \emph{{IEEE} J.
  Sel. Topics Signal Process.}, vol.~2, no.~1, pp. 4--17, Feb. 2008.

\bibitem{c:Kalamkar2013}
S.~S. Kalamkar and A.~Banerjee, ``On the performance of generalized energy
  detector under noise uncertainty in cognitive radio,'' in \emph{National
  Conf. Commun. (NCC)}, New Delhi, India, Feb. 2013, pp. 1--5.

\bibitem{j:NgoMatthaiou2013}
H.~Q. Ngo, M.~Matthaiou, T.~Q. Duong, and E.~G. Larsson, ``Uplink performance
  analysis of multicell {MU-SIMO} systems with {ZF} receivers,'' \emph{{IEEE}
  Trans. Veh. Technol.}, vol.~62, no.~9, pp. 4471--4483, Nov. 2013.

\bibitem{b:ConvexOptimization}
S.~Boyd and L.~Vandenberghe, \emph{Convex Optimization}.\hskip 1em plus 0.5em
  minus 0.4em\relax Cambridge University Press, 2004.

\bibitem{j:MiridakisTsiftsisRowell2017}
N.~Miridakis, T.~Tsiftsis, and C.~Rowell, ``Distributed spatial multiplexing
  systems with hardware impairments and imperfect channel estimation under
  rank-$1$ {R}ician fading channels,'' \emph{{IEEE} Trans. Veh. Technol.}, To
  be published, 2017.

\bibitem{j:MiridakisTsiftsisAlexandropoulos2017}
N.~Miridakis, T.~Tsiftsis, G.~C. Alexandropoulos, and M.~Debbah, ``Simultaneous
  spectrum sensing and data reception for cognitive spatial multiplexing
  distributed systems,'' \emph{{IEEE} Trans. Wireless Commun.}, To be
  published, 2017.

\end{thebibliography}

\end{document}